\newcommand{\marginparN}[1]{\marginpar{}}
\newcommand{\marginparX}[1]{}
\renewcommand{\marginpar}[1]{}
\newcommand{\mat}[1]{\ensuremath{\mathbf{#1}}}
\newcommand{\nnz}[1]{\textit{nnz}(#1)}
\newcommand{\ceil}[1]{\lceil #1 \rceil}
\newcommand{\minisec}[1]{\paragraph{#1}}
\newtheorem{lemma}{Lemma}
\newtheorem{theorem}{Theorem}
\newtheorem{corollary}{Corollary}
\begin{document}


\title[Arrow Matrix Decomposition]{Arrow Matrix Decomposition: A Novel Approach for Communication-Efficient Sparse Matrix Multiplication
}

\author{Lukas Gianinazzi}
\orcid{0000-0001-5975-4526}
\affiliation{%
  \department{Department of Computer Science}
  \institution{ETH Zurich}
}

\author{Alexandros Nikolaos Ziogas}
\orcid{0000-0002-4328-9751}
\affiliation{%
  \department{Department of Electrical Engineering}
  \institution{ETH Zurich}
}

\author{Langwen Huang}
\orcid{0000-0002-9204-0346}
\affiliation{
  \department{Department of Computer Science}
  \institution{ETH Zurich}
}

\author{Piotr Luczynski}
\orcid{0000-0002-8779-4223}
\affiliation{
  \department{Department of Computer Science}
  \institution{ETH Zurich}
}

\author{Saleh Ashkboos}
\orcid{0000-0001-6115-6779}
\affiliation{
  \department{Department of Computer Science}
  \institution{ETH Zurich}
}

\author{Florian Scheidl}
\orcid{0009-0000-5766-894X}
\affiliation{
  \department{Department of Computer Science}
  \institution{ETH Zurich}
}

\author{Armon Carigiet}
\orcid{0009-0002-3555-767X}
\affiliation{
  \department{Department of Computer Science}
  \institution{ETH Zurich}
}

\author{Chio Ge}
\orcid{0009-0005-3735-6160}
\affiliation{
  \department{Department of Computer Science}
  \institution{ETH Zurich}
}

\author{Nabil Abubaker}
\orcid{0000-0002-5060-3059}
\affiliation{
  \department{Department of Computer Science}
  \institution{ETH Zurich}
}

\author{Maciej Besta}
\orcid{0000-0002-6550-7916}
\affiliation{
  \department{Department of Computer Science}
  \institution{ETH Zurich}
}

\author{Tal Ben-Nun}
\orcid{0000-0002-3657-6568}
\affiliation{
  \department{Department of Computer Science}
  \institution{ETH Zurich}
}

\author{Torsten Hoefler}
\orcid{0000-0002-1333-9797}
\affiliation{
  \department{Department of Computer Science}
  \institution{ETH Zurich}
}


\renewcommand{\shortauthors}{Gianinazzi et al.}

\begin{abstract}
We propose a novel approach to iterated sparse matrix dense matrix multiplication, a fundamental computational kernel in scientific computing and graph neural network training.  In cases where matrix sizes exceed the memory of a single compute node, data transfer becomes a bottleneck. An approach based on dense matrix multiplication algorithms leads to suboptimal scalability and fails to exploit the sparsity in the problem. To address these challenges, we propose decomposing the sparse matrix into a small number of highly structured matrices called \emph{arrow} matrices, which are connected by permutations. Our approach enables communication-avoiding multiplications, achieving a polynomial reduction in communication volume per iteration for matrices corresponding to planar graphs and other minor-excluded families of graphs. Our evaluation demonstrates that our approach outperforms a state-of-the-art method for sparse matrix multiplication on matrices with hundreds of millions of rows, offering near-linear strong and weak scaling.
\end{abstract}

\maketitle

\begingroup
\renewcommand\thefootnote{}\footnote{\copyright\ L. Gianinazzi | ACM 2024. Author's version of the work intended for your personal use. Not for redistribution. The definitive Version of Record was published in PPoPP 2024, \url{https://dl.acm.org/doi/10.1145/3627535.3638496}.}
\addtocounter{footnote}{-1}
\endgroup

\section{Introduction}
\label{sec:intro}

Iterated sparse-dense matrix multiplications (SpMM) have numerous applications, including the training and inference of graph neural networks~\cite{DBLP:conf/sc/TripathyYB20} and the computation of eigenvectors~\cite{golub2013matrix,lanczos50Eigenvalue,Paige72Lanczos}. As the matrices arising from these problems are often too large to fit into the memory of a single GPU, they need to be \emph{decomposed} and solved on compute clusters~\cite{DBLP:journals/pc/BorstnikVWH14, DBLP:conf/ics/BuonoPCLQLT16} or processed in several batches~\cite{DBLP:journals/tpds/ZhengMLVPB17}. Data movement becomes the crucial bottleneck for such sparse workloads~\cite{DBLP:conf/ics/BuonoPCLQLT16, DBLP:conf/ics/SelvitopiBNTYB21,DBLP:journals/tpds/ZhengMLVPB17}.

Two existing approaches stand out. The first line adapts efficient algorithms designed for dense matrices to the sparse domain~\cite{DBLP:journals/pc/BorstnikVWH14, DBLP:conf/ics/SelvitopiBNTYB21, DBLP:conf/sc/TripathyYB20}. These techniques offer the advantage of low overhead and simplicity, but they encounter limitations due to their origin in dense algorithms. Consequently, their ability to fully harness the available processing power in the sparse matrix regime is limited. This deficiency forces a compromise between latency, bandwidth, and memory.

The second line of work focuses on matrix reorderings~\cite{DBLP:journals/jgt/ChinnCDG82, DBLP:journals/jcss/Feige00, DBLP:conf/swat/Feige00, DBLP:conf/acm/CuthillM69,DBLP:journals/pc/AcerSA16}. This approach involves permuting the rows and columns of a matrix to enhance computational and communication efficiency. However, these methods often rely on heuristic strategies \cite{DBLP:conf/acm/CuthillM69, DBLP:journals/siamcomp/Turner86, DBLP:conf/micro/GengWZ00YHLL21} and are constrained by unfavorable lower bounds and complexity results \cite{DBLP:conf/swat/Feige00, DBLP:journals/jcss/Feige00, DBLP:journals/computing/Papadimitriou76, DBLP:journals/tcs/Muradian03}. Of particular concern is the sensitivity of these bounds to the maximum degree and the diameter of the graph. This drawback is especially pronounced in scale-free graphs and those with skewed degree distributions.

\begin{figure*}
\begin{subfigure}[t]{0.195\textwidth}
\includegraphics[width=\textwidth]{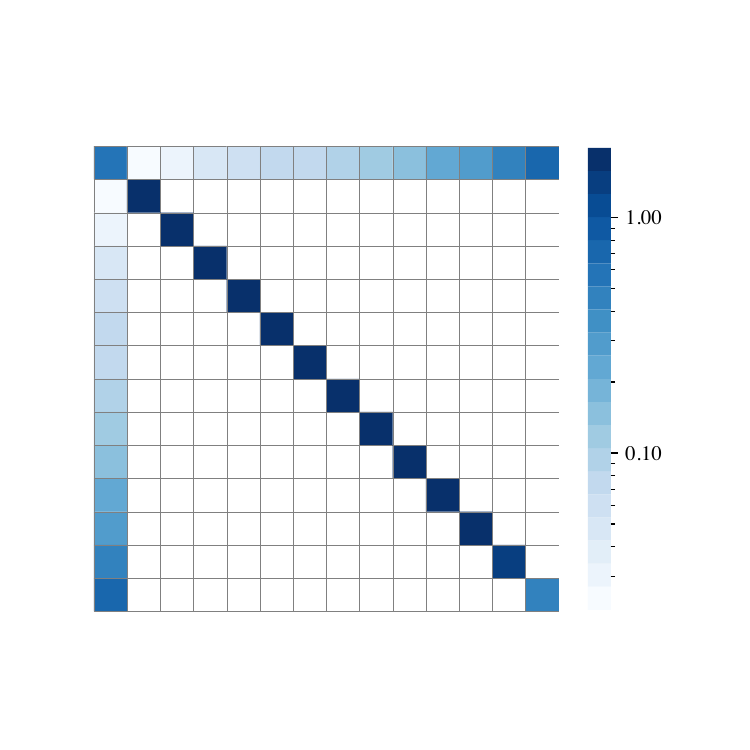}
\vspace{-3em}
\caption{GenBank 68M}
\end{subfigure}
\hfill
\begin{subfigure}[t]{0.195\textwidth}
\includegraphics[width=\textwidth]{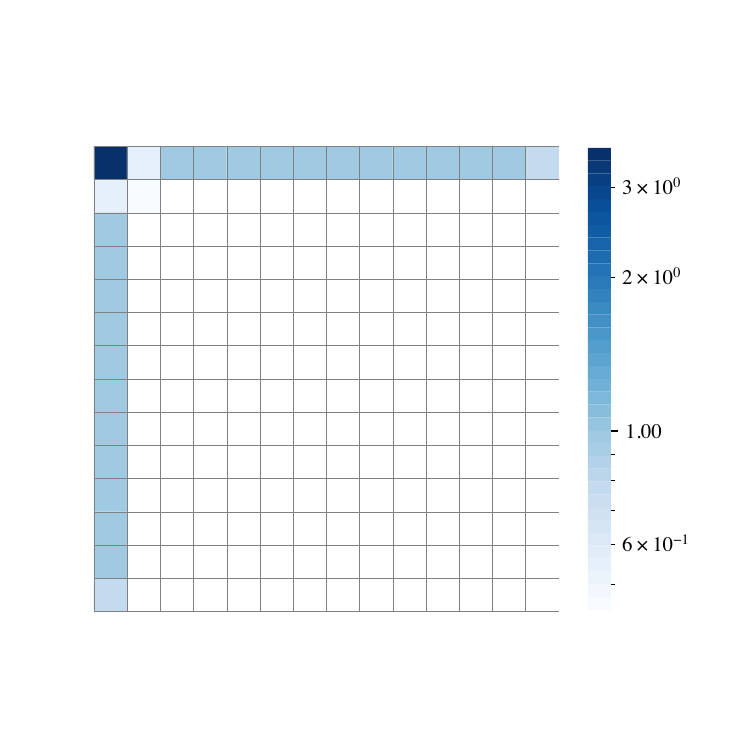}
\vspace{-3em}
\caption{MAWI 69M}
\end{subfigure}
\hfill
\begin{subfigure}[t]{0.195\textwidth}
\includegraphics[width=\textwidth]{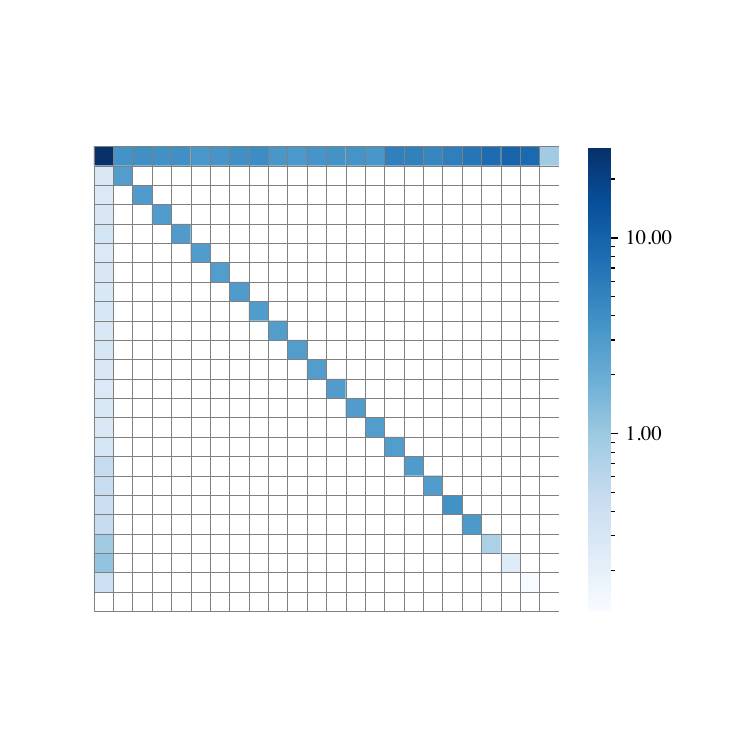}
\vspace{-3em}
\caption{Webbase 118M}
\end{subfigure}
\hfill
\begin{subfigure}[t]{0.195\textwidth}
\includegraphics[width=\textwidth]{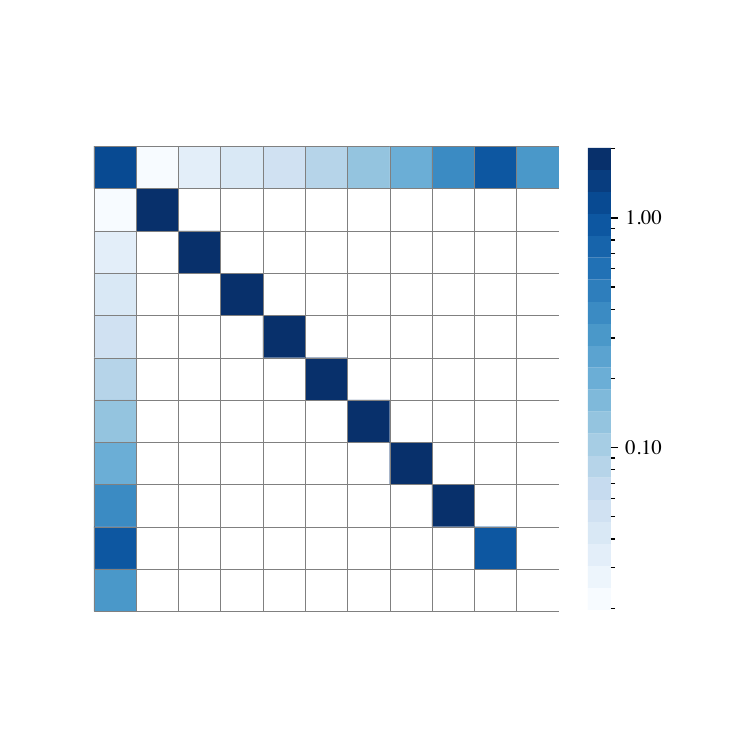}
\vspace{-3em}
\caption{OSM Europe 51M}
\end{subfigure}
\hfill
\begin{subfigure}[t]{0.195\textwidth}
\includegraphics[width=\textwidth]{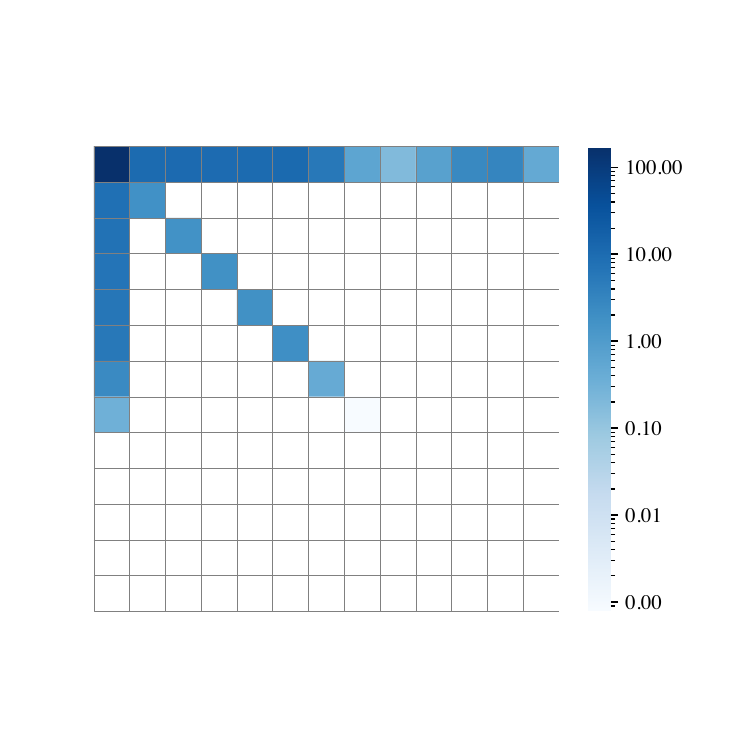}
\vspace{-3em}
\caption{GAP-twitter 62M}
\end{subfigure}
\vspace{-1em}
\caption{Non-zero structure of the first matrix \mat{B_0} in an arrow matrix decomposition for matrices from the SuiteSparse Matrix Collection. The color indicates the number of non-zeros per row; white blocks are empty. Each block has $5$ million rows. 
}
\label{fig:decomposition-illu}
\vspace{-0.5em}
\end{figure*}

To overcome these limitations, we provide a \emph{matrix decomposition} approach to sparse-matrix times dense-matrix operations. The sparse input matrix \mat{A} is \emph{decomposed} into a small number of matrices with bounded \emph{arrow-width} $b$, meaning that all non-zeros are concentrated in the first $b$ rows, columns, and a band of width $b$ around the diagonal. Formally, $\mat{A}$ has arrow-width $b$ if for all $i>b$ and $j>b$ we have that if $A_{ij}\neq 0$, then $|i-j|\leq b$. Arrow-width generalizes the notion of an arrowhead matrix~\cite{DBLP:journals/ijcm/Gravvanis98}, for which $b=1$. 
We decompose a matrix $\mat{A}$ into a sum of matrices of the form $\mat{A}=\sum_{i=1}^{l}  \mat{P}_{\pi_i} \mat{B}_i \mat{P}_{\pi_i} ^{\top}$, where each matrix $\mat{B}_i$ has arrow-width at most $b$ and each matrix $\mat{P}_{\pi_i}$ is a permutation matrix. See Figure \ref{fig:decomposition-illu} for an example of the $\mat{B_0}$ matrices.
Given this decomposition, the computation can be performed on those regularly-structured matrices in a communication-efficient way and, finally, aggregated.

In contrast to traditional bandwidth-minimization, we overcome the fundamental lower bounds with our decomposition. In particular, while any adjacency matrix of a low-diameter tree has $\Omega(n/\log n)$ bandwidth, we show, in particular, how to decompose the adjacency matrix of such a tree into $O(\log n)$ matrices of bandwidth $O(1)$. 
%
We show how to construct such an \emph{arrow matrix decomposition} for several sparsity structures, as characterized by the graphs they represent. The main idea is to use the relationship with \emph{minimum linear arrangement}~\cite{DBLP:journals/siamcomp/RaoR04, DBLP:conf/waoa/EikelSS14, DBLP:journals/algorithmica/CharikarHKR10}. 
Moreover, we prove that the pruning of high-degree vertices enabled by the arrow shape provides a \emph{polynomial} improvement in the communication volume in power law graphs.

Our proposed approach is efficient and can construct the arrow matrix decomposition in polynomial time for a variety of families of graphs, including trees, chordal graphs, planar graphs, and, more generally, $K_r$-minor free graphs. Additionally, we present a linear-time heuristic based on efficient layouts of random spanning trees that can effectively decompose real-world graphs such as biological and web traffic graphs into a small number of increasingly sparse matrices. In our evaluation, we decompose 13 of the largest matrices in the SuiteSparse matrix collection into just two to four matrices of low arrow width, whereas their maximum bandwidth can exceed $90\%$ of the number of columns.

Our approach provides significant reductions in communication costs of sparse matrix-matrix multiplication compared to traditional approaches. In our experiments, we demonstrate the scalability of our approach by testing it on several sparse matrices with over $50$ million rows.
On $128$ GPUs, our approach reduces the communication volume by $3-5$ times compared to a 1.5D decomposition, a state-of-the art approach for SpMM~\cite{DBLP:conf/ics/SelvitopiBNTYB21, DBLP:conf/sc/TripathyYB20}. 
Our approach uses less memory per compute node and effortlessly processes sparse matrices with over $200$ million vertices and dense right-hand side matrices with a larger number of columns. 

Furthermore, our evaluation shows that on a related family of sparse matrices from the same dataset, the runtime of our approach only grows by $2.3-6.2\%$ as we scale from a dataset with $18$ million rows to a dataset with over $200$ million rows when the ratio of vertices over GPUs remains constant.
Overall, our approach shows better scaling both with the number of sparse and the number of dense columns. We demonstrate good strong scaling up to $256$ compute nodes on matrices with over $200$ million rows where we show speedups of $5.3$x-$14.3$x compared to the 1.5D baseline and $1.7$x-$58$x compared to the 1D hypergraph partitioning baseline. 

%

\section{Background}
\label{sec:background}

\minisec{Graphs}
Consider an undirected graph $G$ with $n$ vertices $V(G)$ and $m$ edges $E(G)$. The subgraph of $G$ induced by $S\subseteq V$ is $G[S]$. The degree of a vertex $v$ is deg($v$) and the maximum degree of $G$ is $\Delta(G)$, or $\Delta$ for short when $G$ is clear from the context. Given a rooted tree, the set of descendants of a vertex $v$ is $v^{\downarrow}$. The diameter of a graph $G$ is $D(G)$.
%
%
If there is a permutation $\pi$ of the vertices $V(G)$ such that $\max_{(u, v) \in E(G)} |\pi(u)-\pi(v)| = w$, $G$ has \emph{bandwidth} $w$~\cite{DBLP:conf/acm/CuthillM69, DBLP:journals/jcss/Feige00}.

\minisec{Matrices} We denote the adjacency matrix of $G$ by $\mat{A}$, meaning that the number of non-zeros in $\mat{A}$ is $\nnz{\mat{A}}=m$. We consider a dense \emph{tall and skinny} \emph{feature matrix} $\mat{X}=\mat{X}_0 \in R^{n\times k}$, with $k$ \emph{features} where $k \ll n$~\cite{DBLP:conf/ics/SelvitopiBNTYB21}. Our goal is to compute the matrix iteration $\mat{X}_{t+1}=\sigma(\mat{A}\mat{X_{t}})$ for some number of steps $T$. The function $\sigma$ denotes some application-dependent element-wise function or normalization operation. We will focus our attention on the computation of the product $\mat{Y}=\mat{A}\mat{X}$ in the situation where $T\gg 1$. This means that we can afford to preprocess the problem and amortize the cost over the iterations. 
A matrix has \emph{bandwidth} $w$ if its nonzero elements are at most $w$ away from the diagonal. That is, the matrix $\mat{A}$ has bandwidth $w$ if for all $i$, $j$ we have that if $A_{ij}\neq 0$, then $|i-j|\leq w$.

\minisec{The $\alpha$-$\beta$ Model of Computation}
We consider $p$ processors that can each send and receive one message simultaneously. Sending a message of size $s$ has a latency cost $\alpha$ and a bandwidth cost $\beta\cdot s$~\cite{DBLP:journals/concurrency/ChanHPG07, DBLP:conf/ics/SelvitopiBNTYB21}. A message $M'$ depends on another message $M$ if its content, recipient, or existence depends on $M$. The latency cost of a computation is the largest sum of latency costs along a chain of dependent messages. The bandwidth cost of an algorithm is the largest total bandwidth cost over all processors.


\section{Related Work}
\label{sec:related}



Based on parallel algorithms for dense matrix multiplications~\cite{DBLP:journals/siamsc/SchatzGP16}, Selvitopi et al.~\cite{DBLP:conf/ics/SelvitopiBNTYB21} detail several approaches to tile the sparse-times-dense-skinny SpMM, which trade off communication cost with storage.
%

\minisec{1.5D $A$-stationary} 
The 1.5D $A$-stationary algorithm~\cite{ DBLP:conf/ics/SelvitopiBNTYB21,DBLP:conf/sc/TripathyYB20} arranges the processors in a $\smash{\frac{p}{c} \times c}$ grid, where $c$ is the replication factor.
It slices the matrix \mat{A} into tiles of size $\smash{\frac{nc}{p} \times \frac{n}{c}}$ (splitting it both by row and column), with each processor assigned a single tile.
It splits the feature matrix \mat{X} along the row dimension into tiles of size $\smash{\frac{nc}{p} \times k}$, meaning that each tile is replicated in the $c$ processors of a grid row.
The $\smash{\frac{p}{c}}$ processor of a grid column compute together a single $\smash{\frac{nc}{p} \times k}$-sized tile of the output \mat{Y}, with each processor holding a partial tile of the same size.
To do so, they require $\smash{\frac{p}{c^2}}$ tiles of \mat{X}.
The computation happens in $\smash{\frac{p}{c^2}}$ rounds, broadcasting one of those tiles along the grid column, so each processor needs to hold only one extra \mat{X} tile at any point of the execution.
After executing all rounds, each grid column performs an all-reduce operation to compute the full tile.
The communication cost is $\smash{O\bigl(\alpha \frac{p}{c^2} \log p + \beta \bigl(\frac{nk}{c} + \frac{nkc}{p}\bigr)\bigr)}$~\cite{DBLP:conf/ics/SelvitopiBNTYB21}.
The total storage cost for all processors is $O\left( m + cnk\right)$.
%
For the special case $c=1$, the algorithm is equivalent to a 1D version with communication cost $\smash{O\bigl(\alpha p \log p + \beta \bigl(\frac{nk}{\sqrt{p}} + nk\sqrt{p}\bigr)\bigr)}$ and total storage cost $O\left(m + nk\right)$.
For \emph{full replication}, $c=\sqrt{p}$, the communication cost is $\smash{O\bigl(\alpha \log p + \beta \frac{nk}{\sqrt{p}}\bigr)}$, and the total storage cost is $O\left(m + \sqrt{p}nk\right)$. 
High values of the replication factor thus reduce the communication cost but increase storage. 

\minisec{2D $A$-stationary} 
In contrast to the 1.5D algorithm, the feature matrix \mat{X} is sliced by columns as well in the 2D \mat{A}-stationary algorithm. However, this requires computing the result in $\sqrt{p}$ phases. This both reduces the size of the local SpMM operations (which leads to decreased local SpMM performance~\cite{DBLP:conf/ics/SelvitopiBNTYB21}) and leads to a higher communication cost. Overall, the approach needs to store $O(n/p)$ rows of the feature matrix per processor and has a total communication cost of $\smash{O\bigl(\alpha \sqrt{p}\log p + \beta \frac{nk\log p}{\sqrt{p}}\bigr)}$. Compared to the 1.5D algorithm with $c=\sqrt{p}$, this improves the storage by a factor of $\sqrt{p}$ but increases the latency cost by a factor of $\Theta(\sqrt{p}$) and the bandwidth cost by a factor of $\Theta(\log p)$. Previous work found $2D$ decompositions to scale less favorably compared to 1.5D algorithms in for skinny feature matrices~\cite{DBLP:conf/sc/TripathyYB20, DBLP:conf/ics/SelvitopiBNTYB21}.

\minisec{2D $Y$-stationary} 
In the case where the matrix $\mat{A}$ has fewer columns than rows, an algorithm that keeps the result matrix $\mat{Y}$-stationary improves communication costs~\cite{DBLP:conf/ics/SelvitopiBNTYB21}. As we focus on the case where $\mat{A}$ is square (due to being an adjacency matrix of a graph), this approach does not provide any communication cost benefits in our settings. 


\minisec{Square Dense Matrices}
For the case where the feature matrix is also square, Koanantakool et al.~\cite{DBLP:conf/ipps/KoanantakoolABM16} evaluate several communication avoiding decomposition schemes. In our work, we focus on approaches specifically designed for the case where the feature matrix is skinny, but tall. 

\minisec{Graph Partitioning} 
Several (hyper-)graph-partitioning approaches have been proposed for sparse matrix-vector~\cite{DBLP:conf/ics/SunVN17, DBLP:conf/sc/BomanDR13, page2021scalability, DBLP:journals/tpds/CatalyurekA99} and sparse matrix-matrix~\cite{DBLP:journals/topc/BallardDKS16, DBLP:conf/ipps/DevineBHBC06} multiplication. Our matrix decomposition approach is not based on graph partitioning. Instead, our objective function is based on minimum linear arrangement~\cite{DBLP:journals/algorithmica/CharikarHKR10}. Our approach avoids communication load imbalance between the partitions as we use the 1.5D $A$-stationary algorithm, specialized to arrow matrices. 

\minisec{Graph Reordering} 
If the matrix $\mat{A}$ has low bandwidth, one can efficiently compute the sparse matrix-matrix product $\mat{A}\mat{X}$ with a small communication cost and low storage requirements using a 1.5D $\mat{A}$-stationary algorithm. The bandwidth of $G$ is at least $\ceil{\frac{n-1}{D(G)}}$ and at least $\left \lceil \frac{\Delta(G)}{2} \right \rceil$~\cite{DBLP:journals/jgt/ChinnCDG82} meaning that low-diameter networks~\cite{Albert11Diameter} and power-law networks~\cite{RevModPhys.74.47} have high bandwidth. Computing the bandwidth is NP-hard~\cite{DBLP:conf/swat/Feige00}, even on bounded degree trees~\cite{DBLP:journals/jgt/ChinnCDG82}. 





\section{Arrow Matrix Decomposition}
\label{sec:decompose}

Our approach is to \emph{decompose} the graph $G$ into as few graphs as possible, each having low \emph{arrow-width}. Then, we  efficiently perform an SpMM on each of the arrow matrices and only need to aggregate the partial results. 
We demonstrate in \Cref{sec:pruning} that the arrow shape is necessary to effectively represent graphs with skewed degree distributions. 

In terms of matrices, this results in an \emph{arrow matrix decomposition}  of the form $\mat{A}=\sum_{i=1}^{l}  \mat{P}_{\pi_i} \mat{B}_i \mat{P}_{\pi_i} ^{\top}$, where each matrix $\mat{B}_i$ has arrow-width at most $b$ and each matrix $\mat{P}_{\pi_i}$ is a permutation matrix corresponding to a permutation $\pi_i$ of the vertices of the graph. We call such a decomposition a $b$-arrow matrix decomposition of order $l$. Then, we can compute $\mat{Y}=\mat{A}\mat{X}$ as
\begin{align}
	\mat{A}\mat{X} &= \sum_{i=1}^{l}  \mat{P}_{\pi_i} \left (\mat{B}_i ( \mat{P}_{\pi_i} ^{\top}  \mat{X} )\right ) \enspace , \label{eq:decomp}
\end{align}
meaning that we have reduced the computation of the product onto a series of arrow matrix multiplies, permutations, and reductions.



It is desirable for an arrow matrix decomposition that the number of non-zero rows decreases quickly with $i$, as this reduces storage and communication costs. 
Note that in this case, we can always collect the non-zeros at the top of the matrix. If the total number of non-zeros in $\mat{B}_{i+1}$ is at most $\frac{1}{x}$ times the total number of non-zeros in $\mat{B}_i$, then we say the arrow matrix decomposition is $x$-\emph{compacting}. For $x>1$, an $x$-compacting arrow decomposition has order $O(1+\log_x n)$. 
In our experiments, we will construct order $1-3$ decompositions.

\subsection{Distributed SpMM Algorithm}
\label{sec:spmm-algorithm}

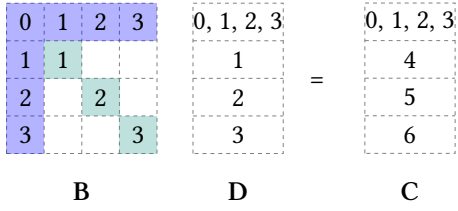
\begin{figure}
\definecolor{color2}{HTML}{2A9D8F}

\tikzstyle{matrix_elem} = [
    font=\Huge
]

  \centering
  \scalebox{0.5}{
    \setlength{\tabcolsep}{12pt}
      \begin{tabular}{ccc}
        \begin{tikzpicture}[]
        \scalebox{1}{
            \draw[help lines, dashed] grid +(4,4);
            \fill[fill=blue, fill opacity = 0.3] (0,3) rectangle (4,4);
            \fill[fill=blue, fill opacity = 0.3] (0,0) rectangle (1,3);
            \fill[fill=color2, fill opacity = 0.3] (1,2) rectangle (2,3);
            \fill[fill=color2, fill opacity = 0.3] (2,1) rectangle (3,2);
            \fill[fill=color2, fill opacity = 0.3] (3,0) rectangle (4,1);
            \node[matrix_elem] at (0.5,3.5) {0};
            \node[matrix_elem] at (1.5,3.5) {1};
            \node[matrix_elem] at (2.5,3.5) {2};
            \node[matrix_elem] at (3.5,3.5) {3};

            \node[matrix_elem] at (0.5,2.5) {1};
            \node[matrix_elem] at (0.5,1.5) {2};
            \node[matrix_elem] at (0.5,0.5) {3};

            \node[matrix_elem] at (1.5,2.5) {1};
            \node[matrix_elem] at (2.5,1.5) {2};
            \node[matrix_elem] at (3.5,0.5) {3};
        }
        \node[font=\Huge] at (2, -1) {$\mat{B}$};
        
        \end{tikzpicture}
        &
        \begin{tikzpicture}[x=0.6cm]
            \scalebox{1}{
                \draw[help lines, dashed] rectangle +(4,4);
                \draw[help lines, dashed] (0,3) -- (4,3);
                \draw[help lines, dashed] (0,2) -- (4,2);
                \draw[help lines, dashed] (0,1) -- (4,1);
                \node[matrix_elem] at (2,3.5) {0, 1, 2, 3};
                \node[matrix_elem] at (2,2.5) {1};
                \node[matrix_elem] at (2,1.5) {2};
                \node[matrix_elem] at (2,0.5) {3};

            }
            \node[matrix_elem] at (5.5, 1.9) {=};
            
            \node[font=\Huge] at (2, -1) {$\mat{D}$};
        
        \end{tikzpicture}
        &
        \begin{tikzpicture}[x=0.6cm]
            \scalebox{1}{
                \draw[help lines, dashed] rectangle +(4,4);
                \draw[help lines, dashed] (0,3) -- (4,3);
                \draw[help lines, dashed] (0,2) -- (4,2);
                \draw[help lines, dashed] (0,1) -- (4,1);
                \node[matrix_elem] at (2,3.5) {0, 1, 2, 3};
                \node[matrix_elem] at (2,2.5) {4};
                \node[matrix_elem] at (2,1.5) {5};
                \node[matrix_elem] at (2,0.5) {6};
            }
            \node[font=\Huge] at (2, -1) {$\mat{C}$};
        
        \end{tikzpicture}

      \end{tabular}
  }
\vspace{0em}
\caption{In a distribution of an arrow matrix $\mat{B}$, each tile of $\mat{B}$ is $b\times b$ and each tile of $\mat{D}$ and $\mat{C}$ is $b\times k$. The numbers indicate the process ranks holding or contributing to the tile. 
}
\vspace{0em}
\label{fig:distribution}
\end{figure}

We present a distributed algorithm for SpMM using an arrow matrix decomposition. In \Cref{sec:spmm}, we analyze the data movement and storage requirements of this algorithm in the $\alpha-\beta$ model. In particular, it improves bandwidth cost and storage requirements by a factor of $\Theta(\sqrt{p})$ at a similar latency cost compared to a fully replicated 1.5D decomposition. 

\begin{algorithm}
\caption{Arrow Matrix Multiply}\label{alg:arrow-multiply}
\DontPrintSemicolon
\KwData{\mat{B}, \mat{D}, rank $r$}
\KwResult{\mat{C}=\mat{B}\mat{D}}
Broadcast($\mat{D^{(0)}}$, root=0) \;
$\mat{C^{(0)}} = \mat{B^{(0, i)}} \mat{D^{(i)}}$ \;
Reduce($\mat{C^{(0)}}$, root=0) \;
\If{$r > 0$}{
    $\mat{C^{(r)}} = \mat{B^{(r, 0)}} \mat{D^{(0)}} +  \mat{B^{(r, r)}} \mat{D^{(r)}}$ \;
}
return $\mat{C^{(r)}}$\;
\end{algorithm}

\minisec{Arrow Matrix SpMM} Let us begin with how to compute the product $\mat{B}\mat{D}=\mat{C}$ when $\mat{B}$ has arrow width $b$. The arrow matrix's non-zeros appear in three bands, leading a 1.5D decomposition to result in most tiles of $\mat{B}$ being zero, thus yielding a communication-efficient algorithm. To further enhance efficiency, we consider a block-diagonal band. 

We tile the $n\times n$ matrix $\mat{B}$, with arrow width $b$, into $b \times b$ tiles, indexed as $\mat{B^{(i, j)}}$. Due to the arrow structure, the non-zeros occur in three types of tiles: $\ceil{\frac{n}{b}}$ row tiles ($\mat{B^{(0,j)}}$), $\ceil{\frac{n}{b}}$ column tiles ($\mat{B^{(i,0)}}$ for $i>0$), and $\ceil{\frac{n}{b}}$ diagonal tiles ($\mat{B^{(i,i)}}$). The matrices $\mat{D}$ and $\mat{C}$ are sliced into $b \times k$ tiles, indexed as $\mat{D^{(i)}}$ and $\mat{C^{(i)}}$. Each rank $i$ initially holds three tiles of $\mat{B}$ and one slice of $\mat{D}$, as depicted in Figure \ref{fig:distribution}. The multiplication using an arrow matrix is detailed in Algorithm \ref{alg:arrow-multiply} and involves two collective communication operations.

\begin{algorithm}
\caption{Arrow Decomposition Multiply}\label{alg:arrow-dec-multiply}
\DontPrintSemicolon
\KwData{Arrow Decomposition $\mat{A}=\sum_{i=1}^{l}  \mat{P}_{\pi_i} \mat{B}_i \mat{P}_{\pi_i} ^{\top}$,  $\mat{X}$, \\ rank $r$, where rank $r$ belongs to the $j$-th arrow matrix.}
\KwResult{\mat{Y}=\mat{A}\mat{X}}
\For{$k \leftarrow 1$ \KwTo $l$}{
    \If{$k == j$} {
        Send \mat{X^{(r)}} to matrix $j+1$\;
    }
    \If{$k + 1 == j$} {
        Receive $\mat{X^{(r)}}=(\mat{P}_{\pi_{j}}^{\top} \mat{X})^{(r)}$ from matrix $j-1$\;
    }
}
$\mat{Y_j^{(r)}} = (\mat{B}_j  (\mat{P}_{\pi_{j}}^{\top} \mat{X}))^{(i)} $ \;
\For{$k \leftarrow l$ \KwTo $1$}{
    \If{$k == j$} {
        Send $\mat{Y^{(r)}}$ to matrix $j-1$\;
    }
    \If{$k-1 == j$} {
        Receive $\mat{\hat Y^{(r)}}=(\sum_{i={r+1}}^{l} \mat{P}_{\pi_i} \mat{Y}_i)^{(r)}$ from $j+1$\;
        \mat{Y^{(r)}} += \mat{\hat Y^{(r)}}
    }
}
\end{algorithm}

\minisec{SpMM Algorithm}
Next, we describe how to multiply with a matrix given its arrow decomposition. Each rank is assigned to one of the matrices of the decomposition. Each matrix is distributed as in \Cref{fig:distribution}. Initially, the first matrix alone contains the input matrix \mat{X}, with each of its ranks $r$ holding a distinct block \mat{X^{(r)}}.  This block is then sent to the subsequent matrix in the sequence, propagating through each matrix. This propagation utilizes a specific permutation, $\pi_{j+1} \circ \pi^{-1}_j$, to shuffle the rows when transmitting from matrix $j$ to matrix $j+1$. Each matrix then computes its local product as described in Algorithm \ref{alg:arrow-multiply}, resulting in a partial output \mat{Y_j}, with the segment \mat{Y_j^{(j)}} stored by the rank $r$ assigned to matrix $j$. Finally, the partial results \mat{Y_j} are aggregated in reverse order, following the opposite pattern of the input matrix distribution. For a detailed explanation of this procedure, refer to Algorithm \ref{alg:arrow-dec-multiply}.

\section{Constructing the Decomposition}
\label{sec:mla}

\begin{figure*}[t]
\centering
\definecolor{color1}{HTML}{E9C46A}
\definecolor{color2}{HTML}{2A9D8F}
\definecolor{color3}{HTML}{E76F51}
\definecolor{color4}{HTML}{ffafcc}

\usetikzlibrary{backgrounds}  
\usetikzlibrary{positioning}
\tikzstyle{proc}=[
    circle,
    minimum size =0.05cm,
    draw=black,
    fill=gray,
    text=white,
    font=\small,
    inner sep=2
]

\tikzstyle{node} = [
  circle,
  draw=black,
  fill=white,
  text=black
]

\tikzstyle{nodesmall}=[
  node,
  font=\tiny,
  inner sep=2
]

\tikzstyle{child1} = [
  node,
  fill=color2
]

\tikzstyle{child2} = [
  node,
  fill=color3
]
\tikzstyle{child3} = [
  node,
  fill=color4
]

\tikzstyle{pruned} = [
    dashed,
    color=blue,
    line width=1
]

\tikzstyle{banded} = [
    color=color2,
    line width=1
]

\tikzstyle{outside} = [
    color=red,
    line width=1
]

\tikzstyle{vertex_name} = [
    font=\large
]

  \centering
  \scalebox{0.49}{
    \setlength{\tabcolsep}{12pt}
      \begin{tabular}{ccccc}
        \begin{tikzpicture}[baseline={(1,-4.5)}]
        \scalebox{1}{
            \node[proc] (3) {};
            \node[proc] (2) [left = 9mm of 3]{};
            \node[proc] (1) [left=9mm of 2]{};
            \node[proc] (4) [right=9mm of 3]{};
            \node[proc] (5) [right=9mm of 4]{};
            \node[proc] (6) [right=9mm of 5]{};
            \node[proc] (7) [right=9mm of 6]{};

            \node[vertex_name, below=8mm of 1] {${\pi^{-1}_0(0)}$};
            \node[vertex_name,  below=8mm of 2] {${\pi^{-1}_0(1)}$};
            \node[vertex_name,  below=8mm of 3] {${\pi^{-1}_0(2)}$};
            \node[vertex_name,  below=8mm of 4] {${\pi^{-1}_0(3)}$};
            \node[vertex_name,  below=8mm of 5] {${\pi^{-1}_0(4)}$};
            \node[vertex_name,  below=8mm of 6] {${\pi^{-1}_0(5)}$};
            \node[vertex_name,  below=8mm of 7] {${\pi^{-1}_0(6)}$};

            \draw[pruned] (1) -- (2);
            \draw[pruned] (1) to[out=45, in=135] (5);
            \draw[pruned] (1) to[out=45, in=135] (6);
            
            \draw[pruned] (2) to[out=45, in=135] (4);
        
            \draw[banded] (3) -- (4);
            \draw[banded] (4) -- (5);
            \draw[banded] (6) -- (7);
        
            \draw[banded] (6) -- (7);
            \draw[banded] (6) -- (7);
            
            \draw[outside] (3) to[out=315, in=225] (6);
        
            \draw[color=blue, line width=1] ($(1) + (-0.25, 0.25)$) -- ++(1.6, 0) -- ++(0, -0.5) -- ++(-1.6, 0) -- ++(0, 0.5);
            }
            \node[font=\Huge] at (1, -4) {$\pi_0$};
            
        \end{tikzpicture}
        &
        \begin{tikzpicture}[]
        \scalebox{0.94}{
            \draw[help lines, dashed] grid +(7,7);
            
            \node[vertex_name] at (0.5,7.5) {${\pi^{-1}_0(0)}$};
            \node[vertex_name] at (1.5,7.5) {${\pi^{-1}_0(1)}$};
            \node[vertex_name] at (2.5,7.5) {${\pi^{-1}_0(2)}$};
            \node[vertex_name] at (3.5,7.5) {${\pi^{-1}_0(3)}$};
            \node[vertex_name] at (4.5,7.5) {${\pi^{-1}_0(4)}$};
            \node[vertex_name] at (5.5,7.5) {${\pi^{-1}_0(5)}$};
            \node[vertex_name] at (6.5,7.5) {${\pi^{-1}_0(6)}$};
            
            \fill[fill=blue] (1,6) rectangle (2,7);
            \fill[fill=blue] (4,6) rectangle (6,7);
            \fill[fill=blue] (1,6) rectangle (0,5);
            \fill[fill=blue] (1,3) rectangle (0,1);
            
            \fill[fill=blue] (3,5) rectangle (4,6);
            \fill[fill=blue] (1,3) rectangle (2,4);

            \fill[fill=color2] (3,4) rectangle (4,5);
            \fill[fill=color2] (2,3) rectangle (3,4);
            
            \fill[fill=color2] (4,3) rectangle (5,4);
            \fill[fill=color2] (3,2) rectangle (4,3);
            
            \fill[fill=color2] (5,0) rectangle (6,1);
            \fill[fill=color2] (6,1) rectangle (7,2);

            \fill[fill=red] (5,4) rectangle (6,5);
            \fill[fill=red] (2,1) rectangle (3,2);

            \draw[color=blue, line width=1.25] (2,0) -- (2,5) -- ++(5,0) -- ++(0,2) -- ++(-7,0) -- ++(0,-7) -- ++(2,0);
            \draw[color=color2, line width=1.25] (5,5) -- ++(0,-1) -- ++(1,0) -- ++ (0,-1) -- ++(1,0) -- ++ (0,-3) -- ++ (-3,0)  -- ++(0,1) -- ++ (-1,0) -- ++(0,1) -- ++ (-1,0) ;
        }
        \node[font=\Huge] at (3.5, -1) {$\mat{P_{\pi_0}^\top} \mat{A} \mat{P_{\pi_0}}$};
        
        \end{tikzpicture}
        &
        \begin{tikzpicture}[]
        \scalebox{0.94}{
            \draw[help lines, dashed] grid +(7,7);
            
            \node[vertex_name] at (0.5,7.5) {${\pi^{-1}_0(0)}$};
            \node[vertex_name] at (1.5,7.5) {${\pi^{-1}_0(1)}$};
            \node[vertex_name] at (2.5,7.5) {${\pi^{-1}_0(2)}$};
            \node[vertex_name] at (3.5,7.5) {${\pi^{-1}_0(3)}$};
            \node[vertex_name] at (4.5,7.5) {${\pi^{-1}_0(4)}$};
            \node[vertex_name] at (5.5,7.5) {${\pi^{-1}_0(5)}$};
            \node[vertex_name] at (6.5,7.5) {${\pi^{-1}_0(6)}$};
            
            \fill[fill=blue] (1,6) rectangle (2,7);
            \fill[fill=blue] (4,6) rectangle (6,7);
            \fill[fill=blue] (1,6) rectangle (0,5);
            \fill[fill=blue] (1,3) rectangle (0,1);
            
            \fill[fill=blue] (3,5) rectangle (4,6);
            \fill[fill=blue] (1,3) rectangle (2,4);

            \fill[fill=color2] (3,4) rectangle (4,5);
            \fill[fill=color2] (2,3) rectangle (3,4);
            
            \fill[fill=color2] (4,3) rectangle (5,4);
            \fill[fill=color2] (3,2) rectangle (4,3);
            
            \fill[fill=color2] (5,0) rectangle (6,1);
            \fill[fill=color2] (6,1) rectangle (7,2);

            \draw[color=blue, line width=1.25] (2,0) -- (2,5) -- ++(5,0) -- ++(0,2) -- ++(-7,0) -- ++(0,-7) -- ++(2,0);
            \draw[color=color2, line width=1.25] (5,5) -- ++(0,-1) -- ++(1,0) -- ++ (0,-1) -- ++(1,0) -- ++ (0,-3) -- ++ (-3,0)  -- ++(0,1) -- ++ (-1,0) -- ++(0,1) -- ++ (-1,0) ;
        }
        \node[font=\Huge] at (3.5, -1) {$\mat{B_0}$};
        
        \end{tikzpicture}
        &
        \begin{tikzpicture}[baseline={(-1,-4.5)}]
        \scalebox{2}{
            \node[proc] (1) {};
            \node[proc] (2) [right of=1]{};
            \node[font=\tiny, below=3mm of 1] {${\pi^{-1}_1(0)}$};
            \node[font=\tiny, below=3mm of 2] {${\pi^{-1}_1(1)}$};
            \draw[pruned] (1) -- (2);
            \draw[color=blue, line width=1] ($(1) + (-0.25, 0.25)$) -- ++(1.5, 0) -- ++(0, -0.5) -- ++(-1.5, 0) -- ++(0, 0.5);
        }
        \node[font=\Huge] at (1, -4) {$\pi_1$};
        \end{tikzpicture}
        &
        \hspace{8mm}
        \begin{tikzpicture}[]
          \scalebox{0.94}{
              \draw[help lines, dashed] grid +(7,7);
        
              \node[vertex_name] at (0.5,7.5) {${\pi^{-1}_1(0)}$};
              \node[vertex_name] at (1.5,7.5) {${\pi^{-1}_1(1)}$};
              \node[vertex_name] at (2.5,7.5) {${\pi^{-1}_1(2)}$};
              \node[vertex_name] at (3.5,7.5) {${\pi^{-1}_1(3)}$};
              \node[vertex_name] at (4.5,7.5) {${\pi^{-1}_1(4)}$};
              \node[vertex_name] at (5.5,7.5) {${\pi^{-1}_1(5)}$};
              \node[vertex_name] at (6.5,7.5) {${\pi^{-1}_1(6)}$};
        
        
              \fill[fill=blue] (1,6) rectangle (2,7);
              \fill[fill=blue] (1,6) rectangle (0,5);
        
              \draw[color=blue, line width=1.25] (2,0) -- (2,5) -- ++(5,0) -- ++(0,2) -- ++(-7,0) -- ++(0,-7) -- ++(2,0);

          }
    
          \node[font=\Huge] at (3.5, -1) {$\mat{B_1}$};
        \end{tikzpicture}

      \end{tabular}
  }
\vspace{-0.5em}
\caption{\textsc{LA-Decompose} produces a linear arrangement $\pi_0$ of the vertices of the graph that corresponds to the sparsity structure of the input matrix. This creates three parts in the matrix (1) A flipped 'L' shape that contains the highest degree vertices (in blue), (2) a band around the diagonal (in green), and (3) the remainder (in red). The first two parts form the first matrix $\mat{B_0}$ of the decomposition. The rest of the decomposition proceeds recursively on the remainder.}
\label{fig:la-decompose}
\end{figure*}
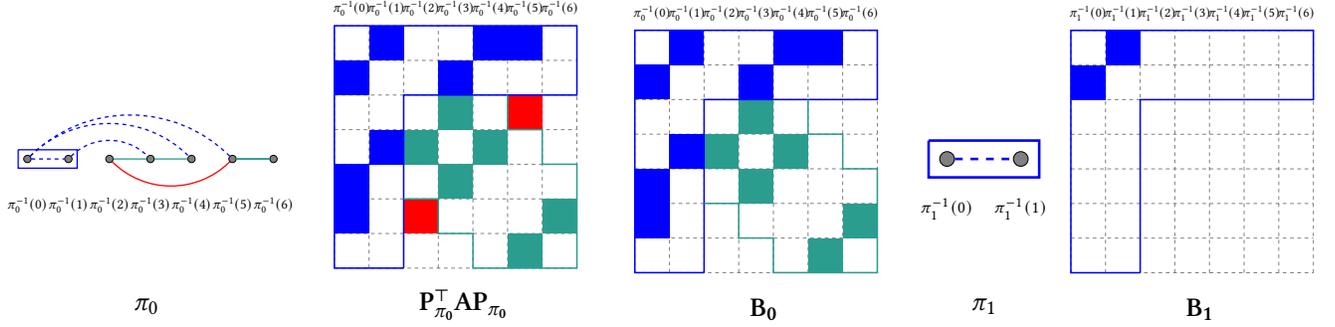

When we construct an arrow matrix decomposition, there is a trade-off between the time to compute the decomposition and its compactness. We frame the decomposition problem in a graph-theoretic language, which allows us to obtain algorithms that are polynomial time and provide strong bounds on certain sparsity structures. 

The high-level idea is to consider the matrix as a graph and find a permutation of its vertices, a so-called \emph{linear arrangement}, such that many edges connect vertices that are close in the order of the permutation. We minimize a cost function that sums over the distances of the edges in the linear arrangement. Because high-degree vertices add high costs, we collect those at the beginning of the order. Then, we construct a remainder graph consisting of the edges that are much further apart than the average cost of the solution and proceed recursively.

In addition to provable polynomial time bounds on several families of graphs, we present a near-linear time heuristic based on high-quality linear arrangements of random spanning trees. Because of its scalability to hundreds of millions of nodes, we use this random spanning forests approach to compute our decompositions in our evaluation. 

\subsection{LA-Decompose}\label{sec:la-decompose}

Any reordering of the matrices and rows of a square matrix can be viewed as a permutation of the vertices of its graph. Such a permutation is 
 called a \emph{linear arrangement}. Our goal is to find a permutation that leads to most of the non-zeros being close to the diagonal. Hence, we consider the cost function $\lambda_{\pi}(G)$ of the linear arrangement $\pi$ as $\lambda_\pi(G)=\sum_{(u, v) \in E(G)} |\pi(u)-\pi(v)|$. If the graph $G$ is clear from the context, we omit $G$ from the notation. A linear arrangement of $G$ with the smallest cost is a \emph{minimum linear arrangement} (MLA)~\cite{DBLP:journals/siamcomp/RaoR04, DBLP:conf/waoa/EikelSS14, DBLP:journals/algorithmica/CharikarHKR10}. Computing a minimum linear arrangement is NP-hard, however, it can be approximated in polynomial time within a $O(\sqrt{\log n} \log \log n)$ factor~\cite{DBLP:journals/algorithmica/CharikarHKR10} and solved exactly in polynomial time on trees~\cite{DBLP:journals/ipl/Alemany-PuigEF22} and chordal graphs~\cite{DBLP:journals/isci/RaoufiRB13}.
Note that a graph with bandwidth $b$ has a linear arrangement of cost at most $mb$ or $nb^2$. In contrast, we show there are graphs with a linear arrangement of cost $O(n \log n)$ but bandwidth $\Omega(n/\log n)$.

The idea of our algorithm is that a linear arrangement $\pi$ of cost $\lambda_\pi(G)$ concentrates a constant fraction of the non-zeros along a $O(\lambda_\pi(G)/m)$-wide band along the diagonal. Removing this portion and repeating the process until no edges are left leads to a compact arrow decomposition. We present a framework for computing an arrow matrix decomposition using a linear arrangement, called \textsc{LA-Decompose}($\mat{A}$, $b$):

We are given a matrix $\mat{A}$ and a desired arrow-width $b \geq 2$. Set $\mat{A}_0=\mat{A}$ and $i=0$. Until the number of non-zeros in $\mat{A}_i$ is at most $2b$, repeat the following steps:
\begin{enumerate}[leftmargin=0.5cm]
    \item Place the $b$ highest degree vertices $V_i^{h}$ at the beginning of the linear arrangement $\pi_i$.
	\item Compute a linear arrangement $\pi'_i$ of the induced subgraph $G_i'=G_i[V_i\backslash V_i^{h}]$ of $\mat{A}_i$ and concatenate it to $\pi_i$.
	\item Set $\mat{B}_i$ equal to the submatrix of $\mat{P}_{\pi_i}^{\top}\mat{A}_i \mat{P}_{\pi_i}$ consisting of the first $b$ rows and columns and a symmetric $b$-wide band around the diagonal.
	\item Set $\mat{A}_{i+1}=\mat{A}_i - \mat{P}_{\pi_i} \mat{B}_i \mat{P}_{\pi_i}^{\top} $ and increment $i$. 
\end{enumerate}
Observe that the matrices $\mat{A}_i$ do not need to be constructed explicitly and one can instead work on the corresponding graphs. 
See \Cref{fig:la-decompose} for an illustration of the approach.

\begin{lemma}\label{lem:la-decompose}
For $x = \frac{bm}{ \max_i \lambda_{\pi_i'}(G_i')}$, 
\textsc{LA-Decompose}($\mat{A}$, $b$) computes an $x$-compacting $b$-arrow matrix decomposition. 
\end{lemma}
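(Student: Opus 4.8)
The plan is to analyze one iteration of \textsc{LA-Decompose} and show two things: first, that each $\mat{B}_i$ produced has arrow-width at most $b$, and second, that the number of non-zeros removed in iteration $i$ (i.e., the non-zeros of $\mat{B}_i$) is a constant fraction — in fact at least a $1/x$ fraction — of the non-zeros of $\mat{A}_i$, so that $\nnz{\mat{A}_{i+1}} \le (1-1/x)\,\nnz{\mat{A}_i}$, which is the $x$-compacting property after reindexing (or, matching the paper's definition, that $\nnz{\mat{B}_{i+1}} \le \frac{1}{x}\nnz{\mat{B}_i}$, which follows from the same bound since $\nnz{\mat{B}_i} \le \nnz{\mat{A}_i}$ and the total mass decreases geometrically). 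The arrow-width claim is immediate from step 3 of the construction: $\mat{B}_i$ is by definition the restriction of the permuted matrix to its first $b$ rows and columns together with a symmetric band of width $b$, so any non-zero $(\mat{B}_i)_{st}$ with $s>b$ and $t>b$ satisfies $|s-t|\le b$.

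The heart of the argument is the counting step. First I would fix an iteration $i$ and write $G_i$ for the graph of $\mat{A}_i$, $V_i^h$ for its $b$ highest-degree vertices, and $G_i' = G_i[V_i\setminus V_i^h]$ with arrangement $\pi_i'$ of cost $\lambda := \lambda_{\pi_i'}(G_i')$. An edge of $G_i$ survives into $\mat{A}_{i+1}$ only if, in the permuted order $\pi_i$, both its endpoints lie after position $b$ (otherwise it is captured by the first $b$ rows/columns — the ``flipped L'') \emph{and} its endpoints are more than $b$ apart (otherwise it is captured by the band). So the surviving edges are precisely those edges of $G_i'$ whose $\pi_i'$-distance exceeds $b$. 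Now I would apply a Markov-type bound: the number of edges of $G_i'$ with $\pi_i'$-distance $>b$ is at most $\lambda / b$, since each such edge contributes more than $b$ to the sum $\lambda_{\pi_i'}(G_i') = \sum_{(u,v)\in E(G_i')}|\pi_i'(u)-\pi_i'(v)|$. Hence $\nnz{\mat{A}_{i+1}} \le \lambda/b \le \max_i \lambda_{\pi_i'}(G_i')/b = \nnz{\mat{A}_i}\cdot\frac{\max_i\lambda_{\pi_i'}(G_i')}{b\,m_i}$ where $m_i = \nnz{\mat{A}_i}$; using $m_i \le m$ for the first iteration and monotonicity of the $\mat{A}_i$, together with the definition $x = bm/\max_i\lambda_{\pi_i'}(G_i')$, gives a removal of at least a $1/x$ fraction at every step, which is exactly what is needed. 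I would also note that the process terminates once $\nnz{\mat{A}_i}\le 2b$, at which point the final matrix trivially has arrow-width $\le b$ (all its non-zeros can be collected into the first $b$ rows/columns), so the decomposition is genuinely finite and of order $O(1+\log_x n)$ as claimed earlier.

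One subtlety I would be careful about is the distinction between ``edges removed in step $i$'' and ``non-zeros of $\mat{B}_i$'': these coincide up to the symmetry of the adjacency matrix and the handling of the diagonal, so the factor-of-two bookkeeping should be absorbed into the constant or, better, handled by observing that $\mat{B}_i$ also contains \emph{all} the edges incident to $V_i^h$ and all short edges, so $\nnz{\mat{B}_i} = \nnz{\mat{A}_i} - \nnz{\mat{A}_{i+1}} \ge (1-1/x)\nnz{\mat{A}_i}$, and then $\nnz{\mat{B}_{i+1}}\le\nnz{\mat{A}_{i+1}}\le\frac{1}{x}\nnz{\mat{A}_i}$, while $\nnz{\mat{B}_i}\ge(1-1/x)\nnz{\mat{A}_i}$; for $x\ge 2$ this yields $\nnz{\mat{B}_{i+1}}\le\frac{1}{x-1}\nnz{\mat{B}_i}$, and one checks whether the paper wants $x$-compacting in the strict sense or up to this shift. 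I expect the main obstacle to be precisely this definitional matching — making sure the Markov bound on high-cost edges of the \emph{reduced} graph $G_i'$ translates cleanly into the ratio $\nnz{\mat{B}_{i+1}}/\nnz{\mat{B}_i}$ demanded by the definition of $x$-compacting — rather than any deep combinatorial difficulty; the edge-counting itself is a one-line Markov inequality.
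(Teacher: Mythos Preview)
Your approach is essentially identical to the paper's: both observe that the arrow-width bound is immediate from the construction and then apply a Markov-type argument to the linear-arrangement cost to bound the fraction of edges whose distance exceeds $b$. You are in fact more careful than the paper, which simply asserts that ``at most a $1/x$ fraction of the edges remain'' without addressing the subtlety you raise about translating a bound on $\nnz{\mat{A}_{i+1}}/\nnz{\mat{A}_i}$ into the required ratio $\nnz{\mat{B}_{i+1}}/\nnz{\mat{B}_i}$.
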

\begin{proof}
The arrow-width of any of the matrices $\mat{B}_i$ is $b$ by construction. Moreover, in step $i$ the average distance from the diagonal is at most $\frac{\lambda_{\pi_i'}(G_i')}{m}$. No more than a $\frac{1}{x}$ fraction of the entries can be more than $x$-times the average away from the diagonal. Hence, in each iteration at most a $\frac{1}{x}$ fraction of the edges remain, i.e., the decomposition is $x$-compacting. 
\end{proof}
\Cref{lem:la-decompose} means that the number of nonzeros decreases geometrically for the matrices in the decomposition as long as $b$ is larger than the average cost of an edge in the linear arrangements $\pi'_i$. In our experiments, this will always be the case for our choices of $b$. 


Next, we will show efficient algorithms for linear arrangements and prove lower bounds for certain families of graphs.
The bounds on the cost of the linear arrangement will depend necessarily on the maximum degree $\Delta$ of the graph. This is why we removed the highest degree vertices before computing the linear arrangement in \textsc{LA-Decompose}. In \Cref{sec:pruning}, we show how the pruning of high-degree vertices improves the arrow decomposition in graphs with a power law degree distribution, which occur in real-word graphs~\cite{RevModPhys.74.47}.

\subsection{Linear Arrangement using Separators}\label{sec:separator-la}

We show how to construct linear arrangements efficiently by recursively partitioning the graph. This allows us to obtain bounds on the cost of a linear arrangement for several families of graphs that can be separated efficiently. 

A set of vertices $S$ whose removal leaves the graph with connected components of size at most $\frac{2}{3}n$ is a $\frac{2}{3}$-separator. 
For any positive integer $k\leq n$, let $s_k(G)$ be the smallest number such that all subgraphs of $G$ containing at most $k$ vertices have a $\frac{2}{3}$-separator of size $s_k(G)$. The \emph{separation number} $s(G)$ of $G$ is $s(G)=s_n(G)$. 
 Small separators can be constructed, in particular, for clique-minor free graphs~\cite{DBLP:conf/focs/KawarabayashiR10} and bounded treewidth graphs~\cite{DBLP:journals/ejc/BottcherPTW10, ROBERTSON1986115}.

\textsc{Separator-LA}$(G)$ constructs a linear arrangement using separators recursively:

\begin{enumerate}[leftmargin=0.5cm]
	\item Compute a $\frac{2}{3}$-separator $S$ of the current subgraph $G$.
	\item Place the vertices of $S$ at the beginning of the linear order.
	\item Then, place the connected components of $G[V(G) \backslash S]$ that remain after removing $S$ in increasing order after $S$. Within each connected component, place vertices recursively using \textsc{Separator-LA}.
\end{enumerate}

\begin{lemma}\label{lem:mla-separator}
\textsc{Separator-LA}$(G)$ produces a linear arrangement of cost $O(n \Delta s(G) \log n)$. If $s_n(G)\in \Theta(n^{\epsilon})$ for some constant $\epsilon>0$, the linear arrangement has cost $O(n\Delta s_n(G))$. 
\end{lemma}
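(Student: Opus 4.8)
The plan is to analyze the recursion tree of \textsc{Separator-LA} level by level and bound the total contribution of edges to the linear arrangement cost $\lambda_\pi(G)$. First I would set up the recursion: at the top level we remove a separator $S_0$ of size at most $s(G)$, and the graph splits into connected components each of size at most $\frac{2}{3}n$; recursion proceeds independently within each component, with the components laid out consecutively. The key structural observation is that the vertices placed in any single recursive call on a subgraph $H$ occupy a contiguous block of positions of width $|V(H)|$. Hence, for an edge $(u,v)$, its stretch $|\pi(u)-\pi(v)|$ is at most the size of the smallest subgraph in the recursion that contains both $u$ and $v$ — equivalently, the size of the subgraph handled by the recursive call at which $u$ and $v$ first get "separated" (one lands in the removed separator, or they go to different components). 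I would charge each edge to that call.

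Next I would count. Consider the recursion level $t$ (root is level $0$). Every subgraph at level $t$ has size at most $\left(\frac{2}{3}\right)^t n$, so there are at most $O(\log_{3/2} n) = O(\log n)$ levels before subgraphs become trivial. At level $t$, the subgraphs at that level are vertex-disjoint, so their total number of vertices is at most $n$, and consequently the separators chosen at level $t$ — each of size at most $s(G)$ within a subgraph of size at most $(\frac23)^t n$ — together contain at most... here one must be slightly careful: the bound $s(G) = s_n(G)$ applies, but since subgraphs at level $t$ have at most $(2/3)^t n$ vertices we may also use $s_{(2/3)^t n}(G)$. Each separator vertex $w$ removed at level $t$ in a subgraph $H$ is an endpoint of at most $\Delta$ edges, and every such edge is charged at this level and has stretch at most $|V(H)| \le (2/3)^t n$. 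Summing over all separator vertices at level $t$: the number of such vertices is at most (number of subgraphs at level $t$) $\times s(G)$; but the cleaner route is to note each edge is charged exactly once, at the level where its endpoints separate, and at that level the charging vertex lies in some separator. So the total cost is $\sum_{t=0}^{O(\log n)} (\text{\# separator vertices at level } t)\cdot \Delta \cdot (2/3)^t n$. Bounding the number of separator vertices at level $t$ by $n/((2/3)^t n) \cdot s_{(2/3)^t n}(G) \cdot$(a constant) — since each subgraph at level $t$ has $\ge$ some size or we just bound the count of subgraphs at level $t$ crudely by $n$ — actually the simplest valid bound is: at each level the separators are disjoint and each has size $\le s(G)$, and the number of recursive calls at level $t$ is at most... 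I would instead argue each edge is charged at most once and its stretch is $O(n (2/3)^t)$ where $t$ is its charging level, and bound the number of edges charged at level $t$ by $\Delta$ times the number of separator vertices appearing at level $t$, which is at most $\Delta \cdot s(G) \cdot (\text{number of calls at level }t)$. Summing over the $O(\log n)$ levels, and using that across one level the calls partition a subset of $V$ so the dominant term telescopes, gives $O(n \Delta s(G) \log n)$. For the second claim, when $s_k(G) = \Theta(k^\epsilon)$ the per-level contribution becomes a geometric series: at level $t$ the relevant separator size scales like $((2/3)^t n)^\epsilon$ and the stretch like $(2/3)^t n$, and summing the products of the form $n^{1+\epsilon} (2/3)^{t(1+\epsilon)} \cdot (\text{\# calls})$ over $t$ is dominated by the $t=0$ term, yielding $O(n\Delta s_n(G))$ without the logarithmic factor.

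The main obstacle I anticipate is accounting carefully for the number of separator vertices (equivalently, the number of edges charged) at each recursion level without double counting, since the recursion branches: one must argue that the subgraphs at a fixed level are vertex-disjoint and hence their vertex counts sum to at most $n$, and then correctly combine this with the non-uniform separator-size bound $s_k$ for subgraphs of size $k \le (2/3)^t n$. A secondary subtlety is handling the fact that a connected component after removing a separator might still be large (up to $\frac{2}{3}$ of the parent), so the depth bound is $\log_{3/2} n$ rather than $\log_2 n$ — this only affects constants. The geometric-series argument for the $s_k \in \Theta(n^\epsilon)$ case is then routine: the ratio between consecutive levels is $(2/3)^{1+\epsilon} < 1$, so the sum is $O(1)$ times the top-level term $n \cdot n^\epsilon \cdot \Delta = n\Delta s_n(G)$ up to constants.
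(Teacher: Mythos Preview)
Your proposal is correct and takes essentially the same approach as the paper: bound the cost contributed at each recursive call by $|V(H)|\cdot\Delta\cdot s_{|V(H)|}(G)$, observe that the subgraphs at any fixed recursion depth are vertex-disjoint so their sizes sum to at most $n$, and then sum over the $O(\log n)$ levels (with a geometric series replacing the $\log n$ factor when $s_k(G)\in\Theta(k^\epsilon)$). One minor wrinkle: in the polynomial case the per-level contribution is at most $\Delta\sum_H |V(H)|^{1+\epsilon}\le \Delta\,n\,((2/3)^t n)^\epsilon$, so the geometric ratio between levels is $(2/3)^\epsilon$ rather than $(2/3)^{1+\epsilon}$ as you wrote, and you should not introduce a separate ``number of calls'' factor---but this does not affect the conclusion.
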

\begin{proof}
The cost at a particular level of recursion is at most $n \Delta |S|  \leq n \Delta s_n(G)$. Let $n_0, \dotsc, n_i$ be the sizes of the connected components of the graph $G[V(G) \backslash S]$ after removing $S$. The cost $\lambda(n)$ of the linear arrangement on $n$ vertices is at most:
\[
\lambda(n) \leq O(n \Delta s_n(G)) + \sum_i \lambda(n_i) \enspace ,
\]
which solves to $O(\Delta n s_n(G) \log n)$ because the depth of the recursion is $O(\log n)$. Note that if $s_n(G)\in \Theta(n^{\epsilon})$,  $\sum_i s_{n_i}(G) < s_n(G)/2$.
\end{proof}

See \Cref{tab:mla-bounds} for a summary of the bounds obtained by \textsc{Separator-LA} on various families of graphs. 


\begin{table}[]
    \caption{Bounds on the cost of a linear arrangement.}
    \centering
    \vspace{-1em}
    \def\arraystretch{1}
    \setlength\tabcolsep{0.5em}
    \begin{tabular}{ll}
    \toprule
     Graph family & Linear arrangement cost\\ \midrule
     $K_r$-minor free   (\S \ref{sec:separator-la}, ~\cite{DBLP:conf/focs/KawarabayashiR10}) & $O(n \Delta \sqrt{n} r)$  \\
     Planar  (\S \ref{sec:separator-la}, ~\cite{doi:10.1137/0136016})  & $O(n \Delta \sqrt{n})$ \\
     Treewidth $\tau$ (\S \ref{sec:separator-la} ~\cite{DBLP:journals/ejc/BottcherPTW10, ROBERTSON1986115})  & $O(n\tau \log n)$ \\
     Series-Parallel (\S \ref{sec:separator-la})  & $O(n\log n)$ \\
     Trees (\S \ref{sec:tree-arrow-decomposition}) & $n\Delta$ \\
     \bottomrule
    \end{tabular}

    \label{tab:mla-bounds}
\end{table}

\subsection{Linear Arrangements using Random MSTs}\label{sec:mla-random-forests}

For datasets with hundreds of millions of nodes, it is crucial to have an algorithm that uses \emph{near-linear} time. Computing separators in $K_r$-minor-free graphs takes $\Omega(m\sqrt{n})$ time using state-of-the-art algorithms~\cite{DBLP:conf/soda/PlotkinRS94}, which becomes prohibitive for graphs with hundreds of millions of vertices.
We propose a linear arrangement scheme using a \emph{random spanning forest} of the input graph: 
\begin{enumerate}[leftmargin=0.5cm]
\item Construct a weighted graph $G'$ by drawing edge weights independently from the standard uniform distribution.
\item Compute a minimum spanning forest $F$ of $G'$.
\item Compute a linear arrangement of each tree in the forest $F$ in decreasing order of size and concatenate them.
\end{enumerate}
In our experiments in Section \ref{sec:experiments}, we evaluate the linear arrangement using random forests and demonstrate its efficacy on real-world datasets. As trees have separation number $2$, we directly get a linear arrangement of the spanning trees of cost $O(n\Delta \log n)$ using \textsc{Separator-LA}. However, improving the quality of the linear arrangement of trees is possible. 

\subsection{Linear Arrangement of Trees}\label{sec:tree-arrow-decomposition}

In this section, we show an improvement in the cost of a linear arrangement over \textsc{Separator-LA} by a factor of $\Theta(\log n)$ for trees. We can get a tighter bound on the arrow width of an arrow decomposition of a tree with the following layout, which we use in our experiments: Place the root at the first position. Then, sort the children subtrees by size and arrange these subtrees one after the other in this order. Arrange each subtree recursively. This arrangement $\pi$ is called \emph{smallest-first order}. Instead of arguing about the cost of the linear arrangement and then using that most edges are close to the average, we directly argue about how many edges are within a $x\Delta$ wide band around the diagonal. For every edge $(u, v)$ in the $x\Delta$-wide band around the diagonal, we have that $|\pi(u)-\pi(v)| \leq x\Delta$.

\begin{lemma}\label{lem:smallest-first-order}
In smallest-first order $\pi$ of a tree $T$, at least
$$\min\left(n-1, \lceil \frac{(x-1)\cdot (n - 1)}{x}\rceil + 1\right)$$
edges are within an $x\Delta$-wide band around the diagonal.
\end{lemma}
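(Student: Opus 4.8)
The plan is to prove the lemma by strong induction on the number of vertices $n=|V(T)|$, working with the \emph{bad} edges (those with $|\pi(u)-\pi(v)|>x\Delta$) rather than the good ones. Writing $m=n-1$ and $\delta_x(n):=\max(0,\lfloor m/x\rfloor-1)$, the identity $\lceil\frac{(x-1)m}{x}\rceil+1 = m-\lfloor m/x\rfloor+1$ shows that the claim is equivalent to: the smallest-first order has at most $\delta_x(n)$ bad edges. If $m<x$, every edge has length at most $m<x\le x\Delta$, so there are no bad edges; this settles the case $m<x$, in particular the base case $n=1$. I assume $x\ge 1$ throughout, which is the relevant regime; I expect the write-up to take $x$ a positive integer so that an ``$x\Delta$-wide band'' is literal, and for general real $x$ one simply replaces $x$ by $\lfloor x\rfloor$.

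For the inductive step, let $r$ be the root and $T_1,\dots,T_d$ the subtrees rooted at its children, indexed so that $n_1\le\cdots\le n_d$ with $n_i=|V(T_i)|$; thus $\sum_i n_i=m$ and $d\le\Delta$. The restriction of smallest-first order to $T_i$ is smallest-first order of $T_i$ and occupies a contiguous block of positions, so intra-$T_i$ edge lengths are unchanged. Since $\Delta(T_i)\le\Delta$, I would apply the induction hypothesis to $T_i$ with the scaled parameter $x_i:=x\Delta/\Delta(T_i)\ge x$; as $y\mapsto m'-\lfloor m'/y\rfloor$ is nondecreasing in $y$, this gives at most $\delta_x(n_i)$ edges of $T_i$ of length exceeding $x_i\Delta(T_i)=x\Delta$. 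The only remaining edges are the $d$ root edges $(r,c_i)$, of length $1+\sum_{j<i}n_j$; these lengths are nondecreasing in $i$, so the bad root edges form a suffix $c_{g_r+1},\dots,c_d$, where $g_r\ge 1$ is the number of good root edges (the edge to $c_1$ has length $1$). Hence the number of bad edges of $T$ is at most $\beta+\sum_{i=1}^d\delta_x(n_i)$ with $\beta:=d-g_r$, and it remains to show $\beta+\sum_i\delta_x(n_i)\le\delta_x(n)$.

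Write $n_i-1=x q_i+r_i$ with $0\le r_i<x$, and let $k$ be the number of \emph{heavy} subtrees, i.e.\ those with $n_i-1\ge x$ (equivalently $q_i\ge1$). Since $q_i=0$ for non-heavy $i$, one has $\sum_i\delta_x(n_i)=\sum_i q_i-k$, and since $m=x\sum_i q_i+\sum_i r_i+d$ one has $\lfloor m/x\rfloor=\sum_i q_i+\lfloor(\sum_i r_i+d)/x\rfloor$; so in the remaining case $m\ge x$ the target inequality reduces to $\beta-k+1\le\lfloor(\sum_i r_i+d)/x\rfloor$. Two observations close this. First, a bad root edge $(r,c_i)$ has $i\ge2$ (so $\Delta\ge2$) and $(i-1)n_i\ge\sum_{j<i}n_j>x\Delta-1$, and since $i-1\le\Delta-1$ this yields $n_i>\frac{x\Delta-1}{\Delta-1}\ge x$; hence every bad root child is heavy and $\beta\le k$. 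Second, if $\beta\ge1$ then $\sum_{j\le g_r}n_j>x\Delta-1$ (the length, minus one, of the first bad root edge). Now argue by cases: if $\beta\le k-1$ the right-hand side is $\ge0\ge\beta-k+1$; if $\beta=k=0$ then $\sum_i r_i+d=m\ge x$, so the right-hand side is $\ge1$; and $\beta=k\ge1$ is impossible, because then the bad root children coincide with the heavy subtrees (both are suffixes of equal size, one contained in the other), so the good root children $c_1,\dots,c_{g_r}$ are all non-heavy, each of size $\le x$, whence $g_r x\ge\sum_{j\le g_r}n_j>x\Delta-1$ forces $g_r\ge\Delta\ge d$, contradicting $\beta=d-g_r\ge1$.

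I expect the main obstacle to be precisely this last step: getting the floor arithmetic to close with the exact constant ``$-1$'', and in particular ruling out the borderline configuration in which the heavy subtrees coincide with the bad root children, which is where the second observation is needed (a bad root edge requires a long prefix of child-subtrees, hence forces the root to have nearly $\Delta$ good children). The remaining bookkeeping, namely passing from $\Delta(T)$ to $\Delta(T_i)$ in the recursion through the scaled parameter $x_i$ and the non-integer-$x$ caveat, is routine and handled as indicated.
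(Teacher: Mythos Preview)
Your argument is correct and follows essentially the same strategy as the paper's: strong induction on the tree, separating the root edges from the intra-subtree edges, and exploiting the key fact that a bad root edge forces its child's subtree to be large (your ``heavy'' subtrees; the paper's observation that $|E(u_w)|\ge x$ for $w>i$). The paper counts good edges directly and splits the sum at the last good root child; your complementary count via $\delta_x(n)=\max(0,\lfloor m/x\rfloor-1)$ and the three-way case analysis on $(\beta,k)$ is a cleaner way to close the floor arithmetic, but the underlying mechanism is identical.

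One simplification: the rescaling $x_i=x\Delta/\Delta(T_i)$ is unnecessary and slightly awkward, because $x_i$ is generally non-integral while your first observation (``$n_i>x$ hence $n_i-1\ge x$'') relies on $x$ being an integer. You can avoid this entirely: apply the induction hypothesis to $T_i$ with the \emph{same} parameter $x$. It bounds the number of $T_i$-edges of length exceeding $x\Delta(T_i)$ by $\delta_x(n_i)$, and since $x\Delta\ge x\Delta(T_i)$ the edges of length exceeding $x\Delta$ are a subset of those, so the same bound holds. This is effectively what the paper does by fixing $\Delta=\Delta(T)$ globally and inducting over rooted subtrees with respect to the fixed band width $x\Delta$.
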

\begin{proof}
Observe that the vertices of every subtree are listed consectively in the linear order $\pi$. Hence, we can use strong induction on the number of edges in the tree $T$. We root the trees at an arbitrary vertex.  
For a vertex $v$, let $E(v)$ be the set of edges in the subtree rooted at $v$. Let $E_{x}(v)$ be the set of edges in $E(v)$ within the $x\Delta$ band around the diagonal. 
Let $P(v)$ be the predicate
\begin{align*}
P(v) \equiv &\text{ If $|E(v)| \geq x$, then at least } \left \lceil \frac{x - 1}{x}|E(v)| \right \rceil + 1 \text{ edges} \\
&\text{in $E(v)$ are within a }
 x\Delta \text{ band around the diagonal}	\enspace .
\end{align*}
Note thay if the tree has less than $x$ edges, then all its edges are within an $x\Delta$ band around the diagonal. 
We prove inductively that $P(v)$ holds for all trees. 
As a base case, consider an arbitrary tree rooted at $v$ with $x \leq E(v) \leq x\Delta$ edges. For such a tree, every edge is within a $x\Delta$ band around the diagonal and $|E(v)| \geq \lceil \frac{x - 1}{x} \cdot |E(v)|\rceil + 1$. 
For the inductive step, consider consider a tree rooted at $v$ where $|E(v)| > x \Delta$. By induction, for each $v'\neq v$ in the subtree rooted at $v$ we may assume that $P(v')$ holds. We proceed by case distinction on the degree of $v$.\\

\noindent
\textbf{Case} $\mathbf{\deg(v) = 1}$
Let $u$ be the child of $v$. By definition of \emph{smallest-first} order, the distance between $u$ and $v$ in the linear arrangement $\pi$ is $1$ and therefore the $\{v, u\} \in E_{x}(v)$. Notice that $E_{x}(v) = \{v, u\} \cup E_{x}(u)$. Because $|E(u)|=|E(v)|-1 \geq x $, we conclude by $P(u)$ that
\begin{align*}
	E_{x}(v) &= 1 + |E_{x}(u)| \\
	&\geq 1 + \left\lceil\frac{x - 1}{x} \cdot |E(u)|\right\rceil + 1 \\
	&\geq \left\lceil \frac{x - 1}{x} \cdot |E(v)|\right\rceil + 1 \enspace .
\end{align*}
\textbf{Case} $\mathbf{\deg(v) \geq 2}$.
Let $C(v) = u_1, ..., u_{\deg(v)}$ be the list of children of $v$, sorted in increasing order by the size of their subtree. Let $i$ be the largest index such that $|\pi(v) - \pi(c_i)| \leq x\Delta$, i.e., $\{v, u_i\}$ is in the $x\Delta$ band. Notice that $\forall j \leq i \enspace |\pi(v) - \pi(u_j)| \leq x\Delta$, by the definition of \emph{smallest-first} order. Because $v$'s subtree is of size greater than $x\Delta$, we have $i < \deg(v)$. It now follows that:
\begin{align*}&|E_{x}(v)|  \\
        = \ &i + \sum_{w = 1}^{\deg(v)}|E_{x}(u_w)|\\
        = \ &\left(\sum_{w = 1}^{i}|E_{x}(u_w)| + 1 \right) + \sum_{w = i + 1}^{\deg(v)}|E_{x}(u_w)|\\
        \geq \ & \left(\sum_{w = 1}^{i}\left\lceil\frac{x-1}{x}|E(u_w)|\right\rceil + 1\right) + 
        \sum_{w = i + 1}^{\deg(v)}\left\lceil\frac{x-1}{x}|E(u_w)|\right\rceil + 1\\
        \geq \ & \sum_{w = 1}^{\deg(v)}\left\lceil\frac{x-1}{x}|E(u_w)|\right\rceil + 1\\
        \geq \ & \deg(v) + \sum_{w = 1}^{\deg(v)}\left\lceil\frac{x-1}{x}|E(v)|\right\rceil\\
        \geq \ & \deg(v) + \frac{x-1}{x} |E(u_v)| \\
        \geq \ & \left\lceil\frac{x-1}{x}|E(v)|\right\rceil + 1 \enspace .
\end{align*}
We now explain the first inequality. First, we look at vertices in $\sum_{w = 1}^{i}\left(|E_{x}(u_w)| + 1 \right)$. If $|E(u_w)| \geq x$, we have by induction hypothesis that $|E_{x}(u_w)| + 1 \geq \left\lceil\frac{x-1}{x}|E(u_w)|\right\rceil + 1$. If $u_w$ has less than $x$ edges in its subtree, we cannot use the induction hypothesis, but we still have $|E_{x}(u_w)| + 1 = |E(u_w)| + 1 \geq \left\lceil\frac{x-1}{x}|E(u_w)|\right\rceil + 1$. Next, observe that $\sum_{w = 1}^i |E(u_w)| \geq x\Delta$ because otherwise $\{v, w_{i + 1}\}$ would be in the $x\Delta$ band. This means that at least one child $u_w$ with $w \leq i$ has to have at least $x$ edges in its subtree. Because the subtrees are sorted by size, it follows that all children $u_w$ with $w > i$ satisfy $|E(u_w)| \geq x$ and we can use the induction hypothesis. 
.
\end{proof}
We immediately get a more efficient $x$-compacting arrow matrix decomposition for trees using \textsc{LA-Decompose}:

\begin{corollary}\label{lem:arrow-tree-decomposition}
	A tree has an $x$-compacting $x\Delta$-arrow decomposition that can be computed in $O(n)$ work.
\end{corollary}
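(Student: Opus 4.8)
The plan is to run \textsc{LA-Decompose}$(\mat{A},\, x\Delta)$ with two changes tailored to trees: (i) skip the high-degree pruning of step~1, since an $x\Delta$-wide band is already a (degenerate) arrow matrix, so we may take $\mat{B}_i$ to be exactly the symmetric $x\Delta$-band of $\mat{P}_{\pi_i}^{\top}\mat{A}_i\mat{P}_{\pi_i}$; and (ii) use the \emph{smallest-first order} as the linear-arrangement subroutine. Explicitly: the remainder $\mat{A}_i$ is a forest; root each of its trees arbitrarily, compute a smallest-first order of each, concatenate them (in any order, e.g.\ decreasing size) to obtain $\pi_i$, set $\mat{B}_i$ to the $x\Delta$-band of $\mat{P}_{\pi_i}^{\top}\mat{A}_i\mat{P}_{\pi_i}$, and recurse on $\mat{A}_{i+1}=\mat{A}_i-\mat{P}_{\pi_i}\mat{B}_i\mat{P}_{\pi_i}^{\top}$ until no edges remain.

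For compaction, fix an iteration $i$. Restricted to each tree $T'$ of $\mat{A}_i$, the arrangement $\pi_i$ is a smallest-first order of $T'$, and the vertices of $T'$ are contiguous in $\pi_i$; moreover every edge of $\mat{A}_i$ lies inside one such tree. Hence \Cref{lem:smallest-first-order}, applied to $T'$ with its own maximum degree $\Delta(T')\le\Delta$, shows that at least $\min(n'-1,\ \lceil\frac{(x-1)(n'-1)}{x}\rceil+1)$ of its $n'-1$ edges fall in an $x\Delta(T')$-wide band, hence in the $x\Delta$-band; so strictly fewer than $\frac{n'-1}{x}$ edges of $T'$ survive into $\mat{A}_{i+1}$. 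Summing over the trees of the forest yields $\nnz{\mat{A}_{i+1}}<\frac{1}{x}\nnz{\mat{A}_i}$. As in the proof of \Cref{lem:la-decompose}, retaining at most a $\frac{1}{x}$ fraction of the nonzeros per round gives an $x$-compacting decomposition; consequently it has order $l=O(1+\log_x n)$ and every $\mat{B}_i$ has arrow-width $x\Delta$ by construction.

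For the work bound, observe that only the non-isolated vertices of $\mat{A}_i$ need be processed, and there are at most $2m_i$ of them, where $m_i\le (n-1)/x^i$ is the number of edges of $\mat{A}_i$; while scanning the edges of $\mat{A}_i$ we build a compacted representation of $\mat{A}_{i+1}$ that drops isolated vertices. Rooting the trees, computing subtree sizes by a post-order pass, extracting the band, and assembling $\mat{A}_{i+1}$ are all $O(m_i)$. The only delicate step is ordering each vertex's children by subtree size: a comparison sort would cost $O(m_i\log m_i)$, so instead we counting-sort all vertices of $\mat{A}_i$ into a shared array of buckets indexed by subtree size (reused across iterations and cleared lazily, for $O(m_i)$ amortized cost), then scan the buckets in increasing order and append each vertex to its parent's child list; a final traversal that visits children in that order produces $\pi_i$. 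Thus iteration $i$ costs $O(m_i)$, and since $m_i$ shrinks by a factor $\ge x\ge 2$ per round, $\sum_i O(m_i)=O(n)$.

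I expect the main obstacle to be hitting the $O(n)$ work target rather than the easier $O(n\log n)$: it forces the linear-time construction of the smallest-first order via counting sort with bucket reuse, and it relies on the observation that discarding isolated vertices turns the per-level work into a geometric series — which in turn rests on the compaction bound of \Cref{lem:smallest-first-order} and on the fact that a fixed $x\Delta$-band stays at least as compacting as the maximum degree of the remainder decreases.
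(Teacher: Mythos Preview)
Your proof is correct and follows the paper's approach exactly: run \textsc{LA-Decompose} with smallest-first order as the arrangement and invoke \Cref{lem:smallest-first-order} directly for the compaction bound, just as the paper's one-line ``Follows from \Cref{lem:la-decompose} and \Cref{lem:smallest-first-order}'' intends. You supply far more detail than the paper does---notably the forest handling in later rounds and a genuine linear-time construction via counting sort on subtree sizes---and the only quibble is the unjustified $x\ge 2$ in your last line; the geometric series $\sum_i m_i$ is still $O(n)$ for any fixed $x>1$.
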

\begin{proof}
Follows from \Cref{lem:la-decompose} and \Cref{lem:smallest-first-order}.
\end{proof}
Note how this result contrasts with the bounds on the bandwidth of a tree graph: The bandwidth of a balanced binary tree is $\Omega(n/\log n)$, whereas, we can construct a decomposition into $O(\log n)$ matrices of bandwidth $O(1)$.

\subsection{Lower Bounds}\label{sec:lower-bounds}

The linear dependence on the maximum degree $\Delta$ is necessary for any linear arrangement of the graph families listed in \Cref{tab:mla-bounds}. We prove the lower bound for trees first, which then implies the other lower bounds:
\begin{lemma}
For every $\Delta>3$, there are trees with a minimum linear arrangement of cost $\Omega(n\Delta)$.	
\end{lemma}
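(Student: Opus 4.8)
The plan is to construct an explicit family of trees, parametrized by $\Delta$, for which any linear arrangement must have cost $\Omega(n\Delta)$, by exhibiting a large collection of vertices each of which is forced to be far (distance $\Omega(\Delta)$ on average) from its neighbors in \emph{any} permutation. A natural candidate is the \emph{complete $(\Delta-1)$-ary tree of depth $2$}: a root with $\Delta-1$ children (the "hubs"), each hub having $\Delta-1$ leaf children. This has $n = 1 + (\Delta-1) + (\Delta-1)^2 = \Theta(\Delta^2)$ vertices and $n-1$ edges, of which $(\Delta-1)^2$ are hub--leaf edges. To get a statement with $n$ arbitrarily large (not just $n=\Theta(\Delta^2)$), I would instead take a forest/tree built by taking $\Theta(n/\Delta^2)$ disjoint copies of this gadget and joining their roots into a path (or a single extra root), which changes the edge count by a negligible additive $O(n/\Delta^2)$ and keeps the maximum degree $\Theta(\Delta)$; the cost lower bound will then be $\Omega(n/\Delta^2) \cdot \Omega(\Delta^3) = \Omega(n\Delta)$ once we prove each gadget contributes $\Omega(\Delta^3)$.

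The core step is the single-gadget bound: in any permutation $\pi$ of the $\Theta(\Delta^2)$ vertices of one depth-$2$ gadget, the sum of edge lengths is $\Omega(\Delta^3)$. I would argue this with a counting/averaging argument localized around each hub. Fix a hub $h$ with its $\Delta-1$ leaf children $\ell_1,\dots,\ell_{\Delta-1}$. Among these $\Delta-1$ leaves, at most $2t+1$ of them can lie within distance $t$ of $\pi(h)$ in the arrangement (there are only $2t$ positions within distance $t$, plus $h$'s own position to exclude). Hence, ordering the leaves of $h$ by their distance to $\pi(h)$, the $j$-th closest has distance at least $\lfloor (j-1)/2 \rfloor$, so $\sum_{i=1}^{\Delta-1} |\pi(h)-\pi(\ell_i)| \geq \sum_{j=1}^{\Delta-1} \lfloor (j-1)/2\rfloor = \Omega(\Delta^2)$. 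Summing this over all $\Delta-1$ hubs gives $\Omega(\Delta^3)$ for the hub--leaf edges alone, which is exactly what we need; the root--hub edges only add to the cost. The only subtlety is that the "positions within distance $t$ of $\pi(h)$" are shared among vertices of \emph{different} hubs, so I must be careful that the per-hub argument is valid simultaneously — but it is, because for each fixed hub $h$ the bound "at most $2t+1$ of $h$'s own leaves are within distance $t$ of $h$" depends only on how many positions exist near $\pi(h)$, not on who else occupies them; each hub's $\Delta-1$ leaves compete only with each other for the nearby slots in the lower-bound inequality. I would phrase this cleanly as: for each hub, $\sum_i |\pi(h)-\pi(\ell_i)| \ge \sum_{j \ge 1}^{\Delta-1} \lfloor (j-1)/2 \rfloor$, purely from the pigeonhole statement about $h$ and its own children.

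For the forest-to-tree reduction and the final accounting: with $N := \lfloor n/\Delta^2\rfloor$ disjoint depth-$2$ gadgets, each on $\Theta(\Delta^2)$ vertices, the total vertex count is $\Theta(n)$ (pad with an $O(1)$-degree path connecting the roots to make it a single tree on exactly $n$ vertices if desired; the connecting edges add cost $\ge 0$ and at most $O(1)$ to the degree of each root, so taking the gadget to be $(\Delta-3)$-ary keeps $\Delta(T) \le \Delta$, which is why the hypothesis $\Delta > 3$ appears). Each gadget forces $\Omega(\Delta^3)$ cost by the previous paragraph, and since the gadgets are vertex-disjoint their edges are disjoint, so the costs add: $\lambda_\pi(T) \ge N\cdot\Omega(\Delta^3) = \Omega\bigl((n/\Delta^2)\Delta^3\bigr) = \Omega(n\Delta)$, for every $\pi$, hence for the minimum. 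The main obstacle I anticipate is purely in the write-up of the per-hub pigeonhole step — making precise the claim "the $j$-th closest leaf of $h$ is at distance $\ge \lfloor(j-1)/2\rfloor$" and confirming the resulting sum is $\Theta(\Delta^2)$ per hub — rather than anything structurally deep; everything else is bookkeeping. (One could alternatively cite a known $\Omega(n\Delta)$ MLA bound for such trees, but the self-contained pigeonhole argument above is short enough to include directly.)
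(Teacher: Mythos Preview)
Your proposal is correct and follows essentially the same approach as the paper: identify $\Omega(n/\Delta)$ vertex-disjoint stars of degree $\Theta(\Delta)$ inside the tree and use a pigeonhole argument to show each star contributes $\Omega(\Delta^2)$ to any linear arrangement, then sum. The only cosmetic difference is the construction---the paper takes a complete $(\Delta-1)$-ary tree and uses the parents of the leaves (together with their children) as the $\Omega(n/\Delta)$ disjoint stars, rather than your path-of-depth-2-gadgets---but the per-star lower bound and the summation are identical in spirit.
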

\begin{proof}
First, consider a star graph of $\Delta-1$ nodes. Any linear arrangement costs at least $\Omega(\Delta^2)$, as, at least a quarter of the nodes are at least $\Delta/4$ away from the central node (no matter where it is placed). Moreover, observe that inserting additional nodes into the graph can only increase the cost incurred by the edges in the star.

Now, consider a complete $(\Delta-1)$-ary tree. The parents of the leaf nodes together with their descendants constitute $\Omega(n/\Delta)$ disjoint star graphs with degree $\Delta-1$. Their layout costs $\Omega(\Delta^2)$ each, which implies the result.
\end{proof}
The  $\Omega(n\Delta)$ lower bound on the cost of a linear arrangement applies to all families in \Cref{tab:mla-bounds} and is tight for trees, as shown in \Cref{sec:tree-arrow-decomposition}. The linear dependence on the maximum degree is undesirable, as many sparse real-world graphs exhibit a large maximum degree~\cite{RevModPhys.74.47}. 
Next, we show the arrow decomposition overcomes this dependence on the maximum degree by pruning the highest-degree vertices.

\subsection{Pruning in Power Law Graphs}
\label{sec:pruning}

Many real-world graphs, such as the web graph, social networks, and protein interaction networks, exhibit a power law degree distribution~\cite{RevModPhys.74.47}. This means that while the average degree is small, the maximum degree can be a significant fraction of the number of vertices. On these graphs, the first step of \textsc{LA-Decompose} (pruning the highest degree vertices) provides a \emph{polynomial} improvement in the arrow width. We proceed to bound the improvement analytically as a function of the power law. 

There are various probability distributions that generate a power law~\cite{balakrishnan2004primer,KOZUBOWSKI2015135,zipf1949human,22AddoTruncatedZipf}. To model the vertex degrees, it is appropriate to choose a discrete distribution that is bounded to the interval of the number of vertices. Hence, we model the degree distribution of a vertex as a \emph{discrete truncated Zipf distribution}~\cite{22AddoTruncatedZipf}, truncated between $1$ and $n$ with shape parameter $\alpha$. Note that for simplicity of notation, we are considering $n+1$ vertices here giving degrees between $1$ and $n$. We exclude the possibility of singleton vertices, as they do not contribute to the arrow width. 

The probability mass function $p(x)$ of a discrete truncated Zipf distribution is given by
\begin{align}
    p(x) = \frac{x^{-\alpha}}{\sum_{j=1}^{n} j^{-\alpha}} \enspace .
\end{align}
The term in the denominator is the generalized harmonic number $H_{n, \alpha}$. Note that as $n$ goes to infinity, the generalized harmonic numbers approach the Riemann zeta function $\zeta(\alpha)=\sum_{j=1}^{\infty} j^{-\alpha}$.  For an integer $x\geq 0$, the survival probability $S(x)$ is given by
\begin{align}
S(x) = \frac{H_{n, \alpha}-H_{x, \alpha}}{H_{n, \alpha}} \enspace .
\end{align}
The expected number of vertices with degrees larger than some given $x$ is at most $nS(x)$. This tells us how many vertices we need to prune (in expectation) to be left with a graph with maximum degree $x$. To derive a bound on this expectation $nS(x)$, we derive a closed-form approximation to the survival function:

\begin{theorem}
\label{thm:zipf-survival-function}
For all $x$ larger than some constant, the survival function $S(x)$ of the truncated Zipf distribution with shape $\alpha>1$ truncated between $1$ and $n$ is bounded by $S(x)\leq \frac{x^{1-\alpha}}{(\alpha-1)\zeta(\alpha)}$.
\end{theorem}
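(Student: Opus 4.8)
The plan is to bound the survival function $S(x) = \frac{H_{n,\alpha} - H_{x,\alpha}}{H_{n,\alpha}}$ by controlling the numerator from above and the denominator from below. The numerator is $H_{n,\alpha} - H_{x,\alpha} = \sum_{j=x+1}^{n} j^{-\alpha}$, which I would compare to an integral: since $t \mapsto t^{-\alpha}$ is decreasing, $\sum_{j=x+1}^{n} j^{-\alpha} \leq \int_{x}^{n} t^{-\alpha}\,dt \leq \int_{x}^{\infty} t^{-\alpha}\,dt = \frac{x^{1-\alpha}}{\alpha - 1}$, where the last equality uses $\alpha > 1$ so the improper integral converges. For the denominator, I would simply use $H_{n,\alpha} = \sum_{j=1}^{n} j^{-\alpha} \geq \sum_{j=1}^{\infty} j^{-\alpha} \cdot \frac{H_{n,\alpha}}{\zeta(\alpha)}$ — actually the cleaner route is: $H_{n,\alpha}$ is increasing in $n$ and bounded above by $\zeta(\alpha)$, but we need a \emph{lower} bound. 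For $x$ larger than some constant, $H_{n,\alpha} \geq H_{x,\alpha}$, but that is not quite enough; instead observe $H_{n,\alpha} \to \zeta(\alpha)$ and in fact $H_{n,\alpha} \geq c\,\zeta(\alpha)$ for any $c<1$ once $n$ is large, or more simply $H_{n,\alpha} \geq H_{2,\alpha} = 1 + 2^{-\alpha}$, a constant. The statement as written has $\zeta(\alpha)$ in the denominator of the bound, so the intended estimate is $H_{n,\alpha} \geq \zeta(\alpha) \cdot (1 - o(1))$ absorbed into the "$x$ larger than some constant" slack, or one shows the stronger bound with a constant slightly less than $\zeta(\alpha)$ and notes the claimed form holds after adjusting the constant threshold.

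Concretely, the steps in order: (1) write $H_{n,\alpha} - H_{x,\alpha} = \sum_{j=x+1}^{n} j^{-\alpha}$; (2) bound this sum above by $\int_x^\infty t^{-\alpha}\,dt = \frac{x^{1-\alpha}}{\alpha-1}$ using monotonicity of the integrand; (3) bound $H_{n,\alpha}$ below — either by $\zeta(\alpha)$ directly if one is willing to have the inequality be $S(x) \le \frac{x^{1-\alpha}}{(\alpha-1) H_{n,\alpha}} \le \frac{x^{1-\alpha}}{(\alpha-1)\,\zeta(\alpha)\,(1-\varepsilon)}$ and then observe that since $x^{1-\alpha}$ itself shrinks, the $(1-\varepsilon)$ factor is harmless for $x$ beyond a constant, or by directly noting $H_{n,\alpha} \geq \zeta(\alpha) - \frac{x_0^{1-\alpha}}{\alpha-1}$ for a fixed large $x_0 \le n$; (4) combine to get $S(x) \leq \frac{x^{1-\alpha}}{(\alpha-1)\zeta(\alpha)}$.

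The main obstacle is handling the denominator cleanly: $H_{n,\alpha}$ is strictly less than $\zeta(\alpha)$ for every finite $n$, so a literal reading of the claimed bound requires some care — the "$x$ larger than some constant" hypothesis is presumably doing the work of absorbing the gap $\zeta(\alpha) - H_{n,\alpha}$, which itself is $O(n^{1-\alpha})$ and hence small when $n$ (hence the relevant range of $x \le n$) is large. I would make this rigorous by fixing the threshold $x_0$ so that $\frac{x_0^{1-\alpha}}{\alpha-1} \le \zeta(\alpha) - H_{n,\alpha}$ fails to matter, i.e. choosing constants so that $H_{n,\alpha} \ge \frac{\alpha-1}{\alpha} \zeta(\alpha)$ or similar and then checking the arithmetic goes through; this is the one place where the inequality is tight enough that the bookkeeping matters, whereas steps (1)–(2) are routine integral comparison.
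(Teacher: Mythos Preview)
Your integral-comparison idea for the numerator is exactly right and in fact simpler than the paper's Euler--Maclaurin expansion, which delivers the same inequality $\zeta(\alpha)-H_{x,\alpha}\le \frac{x^{1-\alpha}}{\alpha-1}$. The gap is in step (3): you are trying to \emph{lower}-bound $H_{n,\alpha}$ by $\zeta(\alpha)$, which is impossible since $H_{n,\alpha}<\zeta(\alpha)$ for every finite $n$. None of your proposed repairs actually recover the stated constant: replacing $\zeta(\alpha)$ by $(1-\varepsilon)\zeta(\alpha)$ or by $\tfrac{\alpha-1}{\alpha}\zeta(\alpha)$ yields a bound that is strictly weaker than the claim by a fixed multiplicative factor, and no choice of the threshold on $x$ removes that factor (the ratio of the two sides is independent of how small $x^{1-\alpha}$ is).

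The fix is to flip the bounding direction on the denominator. Write $S(x)=1-\dfrac{H_{x,\alpha}}{H_{n,\alpha}}$ and observe that $t\mapsto 1-H_{x,\alpha}/t$ is increasing, so replacing $H_{n,\alpha}$ by the \emph{larger} quantity $\zeta(\alpha)$ gives
\[
S(x)\;\le\;1-\frac{H_{x,\alpha}}{\zeta(\alpha)}\;=\;\frac{\zeta(\alpha)-H_{x,\alpha}}{\zeta(\alpha)}\;\le\;\frac{x^{1-\alpha}}{(\alpha-1)\zeta(\alpha)},
\]
the last step being your integral bound applied to the tail $\sum_{j>x}j^{-\alpha}$. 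This is precisely the maneuver the paper makes: it works with $F(x)=H_{x,\alpha}/H_{n,\alpha}$, lower-bounds the numerator and \emph{upper}-bounds the denominator (using Euler--Maclaurin, but $H_{n,\alpha}\le\zeta(\alpha)$ already suffices). With this correction your argument is complete --- and in fact holds for all $x\ge 1$, not just large $x$; the threshold in the statement is an artifact of the paper's use of the asymptotic expansion.
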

\begin{proof}
We lower bound the cumulative distribution function $F(x) = 1-S(x) = \frac{H_{x,\alpha}}{H_{n,\alpha}}$, which gives us an upper bound on $S(x)$. The main technical challenge is to obtain a suitable closed-form approximation to the generalized harmonic numbers. We employ the Euler-Maclaurian summation formula~\cite{graham94ConcreteMath} to bound $\zeta(\alpha)-H_{n, \alpha}$, which implies that for any constant $\alpha>1$ 
\begin{align*}
H_{n, \alpha} = \zeta (\alpha) +  \frac{n^{1-\alpha}}{1-\alpha}  + \frac{n^{1-\alpha}}{2n}  - \frac{\alpha n^{1-\alpha}}{12n^2}  + O\left(\frac{\alpha n^{1-\alpha}}{n^3} \right) \enspace .
\end{align*}
For large enough $x$ and $x+1 \geq \alpha>1$, the first two terms dominate:
\begin{align*}
    H_{x,\alpha} \geq \zeta (\alpha) +  \frac{1}{1-\alpha} x^{1-\alpha} \enspace , \\
    H_{x,\alpha} \leq \zeta (\alpha) +  \frac{1}{2(1-\alpha)} x^{1-\alpha} \enspace . 
\end{align*}

\noindent
Using these inequalities we can proceed:
\begin{align*}
    F(x) \geq & \ \frac{ \zeta (\alpha) +  \frac{1}{1-\alpha} x^{1-\alpha} }{ \zeta (\alpha) +  \frac{1}{2(1-\alpha)} n ^{1-\alpha} } \enspace \\
    = & \ \frac{2(\alpha-1)n^{\alpha-1} \zeta (\alpha) - 2\frac{n^{\alpha-1}}{x^{\alpha-1}}}{2(\alpha-1)n^{\alpha-1} \zeta (\alpha) - 1} \enspace \\
    \geq & \ 1 - \frac{x^{1-\alpha}}{(\alpha-1) \zeta (\alpha)}  \enspace , 
\end{align*}
which implies the result.
\end{proof}
Note that this implies that the survival function itself takes on the shape of a power law. The larger $\alpha$, the quicker the survival function diminishes. We are now ready to bound the number of high-degree vertices in a power law graph.
\begin{lemma}\label{lem:zip-degree-bound}
Consider a graph $G$ whose degree distribution follows a truncated Zipf distribution with shape parameter $\alpha>1$. For any $b\geq \Omega(1)$ and $\Delta_0\geq \Omega(1)$, the probability that $G$ has more than $b$ vertices of degree larger or equal to $\Delta_0$ is at most $\frac{n \Delta_0^{1-\alpha}}{b(\alpha-1) \zeta(\alpha)}$.
\end{lemma}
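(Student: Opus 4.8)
The plan is a textbook first-moment argument: bound the \emph{expected} number of large-degree vertices using only the single-vertex degree law, plug in the closed-form tail estimate of \Cref{thm:zipf-survival-function}, and then convert the bound on the mean into a bound on the tail probability with Markov's inequality.

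Concretely, first I would let $N$ be the number of vertices of $G$ of degree at least $\Delta_0$, written as a sum of indicator random variables, one for each of the $n$ vertices. By linearity of expectation, $\mathbb{E}[N]=n\cdot\Pr[\deg(v)\geq\Delta_0]$; crucially this step uses only the marginal law of each $\deg(v)$ and is insensitive to how the degrees of distinct vertices are correlated. Since each $\deg(v)$ is truncated Zipf, $\Pr[\deg(v)\geq\Delta_0]=S(\Delta_0-1)$ by the definition of the survival function, and for $\Delta_0$ beyond the constant threshold of \Cref{thm:zipf-survival-function} this is at most $\tfrac{\Delta_0^{1-\alpha}}{(\alpha-1)\zeta(\alpha)}$ (the shift of the argument by one is absorbed into that threshold, which is legitimate because $\Delta_0\geq\Omega(1)$). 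Hence $\mathbb{E}[N]\leq \tfrac{n\Delta_0^{1-\alpha}}{(\alpha-1)\zeta(\alpha)}$. Finally, since $N\geq 0$ and $b\geq\Omega(1)>0$, Markov's inequality gives $\Pr[N>b]\leq\Pr[N\geq b]\leq \tfrac{\mathbb{E}[N]}{b}\leq \tfrac{n\Delta_0^{1-\alpha}}{b(\alpha-1)\zeta(\alpha)}$, which is exactly the claimed bound.

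I do not expect a real obstacle here: \Cref{thm:zipf-survival-function} already supplies the only genuinely technical estimate (the Euler--Maclaurin expansion of the generalized harmonic numbers), so what is left is a one-line first-moment-plus-Markov computation. The two points that need a sentence of care are (i) noting that only the mean $\mathbb{E}[N]$ is used, so the argument requires neither an independence assumption on the graph's degrees nor any concentration inequality; and (ii) checking that the event $\deg(v)\geq\Delta_0$ matches the survival probability $S(\Delta_0-1)$, so that passing to the clean exponent $\Delta_0^{1-\alpha}$ is what consumes the slack in the unspecified constant of the hypothesis $\Delta_0\geq\Omega(1)$ (equivalently, the statement is immediate if one reads it with strict inequality $\deg(v)>\Delta_0$). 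Should one prefer to avoid the theorem, the marginal tail $\sum_{j\geq\Delta_0}j^{-\alpha}/H_{n,\alpha}$ can be bounded directly by an integral comparison and the fact that $H_{n,\alpha}\to\zeta(\alpha)$, but reusing \Cref{thm:zipf-survival-function} is cleaner.
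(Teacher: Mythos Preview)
Your proposal is correct and matches the paper's own proof essentially line for line: the paper simply notes that the expected number of vertices of degree exceeding $\Delta_0$ is at most $nS(\Delta_0)$, invokes \Cref{thm:zipf-survival-function}, and applies Markov's inequality. Your additional remarks about the off-by-one in $S(\Delta_0-1)$ versus $S(\Delta_0)$ and the irrelevance of independence are more careful than the paper's two-sentence proof, but the approach is identical.
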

\begin{proof}
The expected number of vertices with degrees larger than $\Delta_0$ is at most $nS(\Delta_0)$. The result follows from \Cref{thm:zipf-survival-function} and Markvov's inequality.
\end{proof}

Let us see what this implies for the question of pruning high-degree vertices. 
If we set $\Delta_0=n^{\delta}$ for some constant $\delta>0$, we get that after pruning the $b\in \omega(n^{(1-\alpha)\delta + 1})$ vertices of largest degree, the maximum degree of the remaining subgraph is at most $\Delta_0$ with probability $1-o(1)$. As our bounds on the cost of a linear arrangement depend linearly on the maximum degree of the subgraph that remains after pruning the high-degree graphs, we would like to balance the number of pruned vertices with the remaining maximum degree. The parameter $\delta$ that achieves this balance is $\delta=\frac{1}{\alpha}$. 

We conclude with the implication of this result for the arrow decomposition of trees and note that we can derive similar statements for the other considered graph families:
\begin{corollary}
Consider a tree whose degree distribution follows a truncated Zipf distribution with shape $\alpha>1$. \textsc{LA-Decompose} with parameter $b=\omega( n^{\frac{1}{\alpha}})$ produces an $x$-compacting $xb$-arrow matrix decomposition with probability $1-o(1)$.
\end{corollary}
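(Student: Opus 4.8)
The plan is to combine \Cref{lem:zip-degree-bound} with \Cref{lem:arrow-tree-decomposition} (the smallest-first tree layout) in the obvious way, being careful about what "the tree whose degree distribution follows a truncated Zipf distribution" means for the subtree that remains after pruning. First I would invoke \Cref{lem:zip-degree-bound} with $\Delta_0 = n^{1/\alpha}$ and $b = \omega(n^{1/\alpha})$: since $(1-\alpha)\delta + 1 = (1-\alpha)/\alpha + 1 = 1/\alpha$ when $\delta = 1/\alpha$, the bound $\frac{n\Delta_0^{1-\alpha}}{b(\alpha-1)\zeta(\alpha)} = \frac{n^{1/\alpha}}{b(\alpha-1)\zeta(\alpha)}$ is $o(1)$ for $b = \omega(n^{1/\alpha})$. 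So with probability $1-o(1)$, removing the $b$ highest-degree vertices leaves a forest in which every vertex has degree at most $\Delta_0 = n^{1/\alpha}$.

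Next I would track what \textsc{LA-Decompose} does with parameter $b$: by construction step 1 places exactly the $b$ highest-degree vertices in the flipped-`L' part, and the remaining graph $G_1'$ (a forest of at most $b+1$ subtrees) has maximum degree at most $\Delta_0 = n^{1/\alpha}$ with probability $1-o(1)$. Applying the smallest-first order of \Cref{lem:smallest-first-order} to each tree of this forest, concatenated in decreasing order of size, gives — exactly as in the proof of \Cref{lem:arrow-tree-decomposition} — that at least an $\frac{x-1}{x}$ fraction of the edges lie within an $x\Delta_0 = xb$-wide band around the diagonal, since $\Delta_0 \le b$. Hence the band of width $xb$ captures all but a $\frac{1}{x}$ fraction of the non-zeros, the remainder matrix recurses on a tree, and (since the maximum-degree bound propagates to every induced subforest) the argument repeats at every level. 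This yields an $x$-compacting $xb$-arrow matrix decomposition, which is the claim.

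The main technical subtlety — and the step I would be most careful about — is the interaction between the $b$-vertex pruning inside \textsc{LA-Decompose} and the degree bound: I must check that after \emph{both} removing the $b$ highest-degree vertices \emph{and} restricting to any connected component that appears later in the recursion, the maximum degree stays at most $\Delta_0$. Degrees only decrease under vertex deletion and under taking induced subgraphs, so the high-probability event from \Cref{lem:zip-degree-bound} — "$G$ has at most $b$ vertices of degree $\ge \Delta_0$" — is enough: once those $b$ vertices are pruned in the first round, every remaining vertex (in the whole remainder and in all later rounds) has degree $< \Delta_0$ deterministically on that event. A minor bookkeeping point is that $xb \ge x\Delta_0$ must hold for the band in \Cref{lem:smallest-first-order} to be wide enough; this is immediate from $b = \omega(n^{1/\alpha}) \ge n^{1/\alpha} = \Delta_0$ for large $n$. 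Assembling these observations gives the corollary; no genuinely new estimate is required beyond the two lemmas already proved.
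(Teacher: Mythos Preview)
Your proposal is correct and follows exactly the route the paper takes: the paper's own proof is the single line ``Follows from \Cref{lem:zip-degree-bound} and \Cref{lem:arrow-tree-decomposition},'' and you have simply unpacked that combination with the appropriate choice $\Delta_0=n^{1/\alpha}$ and verified the bookkeeping. Your added remarks about degrees only decreasing under vertex deletion and about $b\ge\Delta_0$ ensuring the $x\Delta_0$ band fits inside the $xb$ band are precisely the checks needed to make the one-line citation rigorous.
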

\begin{proof}
Follows from \Cref{lem:zip-degree-bound} and \Cref{lem:arrow-tree-decomposition}.
\end{proof}

Observe that this bound is now independent of the maximum degree in the original graph, which would have been $\Omega(n)$ in expectation. Hence, pruning the high-degree vertices provides a polynomial improvement in the arrow width of power law graphs.

\section{Data Movement Analysis}
\label{sec:spmm}

Sparse matrix multiplication is a typical memory-bound operation when the dense matrix is tall and skinny, as the number of arithmetic operations is of a similar order of magnitude to the number of memory accesses. Hence, minimizing data movement is paramount to achieving the best performance and scalability. Note that for sparse datasets, the size of the feature matrix $\mat{X}$ dominates the storage, i.e., $m \ll nk$. Our algorithm falls into the class of $\mat{A}$-stationary algorithms, where the sparse matrix remains local and only the dense feature matrix and the result of the SpMM are communicated. 
%
We show that at the cost of a slightly increased latency, a $c$-compacting arrow decomposition enables a $\Theta(\sqrt{p})$ reduction in bandwidth requirements compared to a direct 1.5D decomposition and a $\Theta({\sqrt{p}})$ storage improvement in the setting where the feature matrix dominates the storage.

\subsection{Data Movement}

\paragraph{Multiplying with an Arrow Matrix}

We now analyze the communication cost incurred by our approach from Section 3.1 in the $\alpha-\beta$ model of computation. We focus first on the multiplication of an arrow matrix $\mat{B}$ with a tall-skinny matrix $\mat{X}$.

\begin{lemma}\label{lem:distributed-arrow-spmm}
Consider a matrix $\mat{B}\in R^{n\times n} $ with arrow-width $b$ and a matrix $\mat{X} \in R^{n\times k}$. If $p=\ceil{n/b}$, computing $Y=\mat{BX}$ has a communication cost of $O(\alpha \log p + \beta\;bk \log p)$.
\end{lemma}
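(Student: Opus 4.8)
The plan is to instantiate \Cref{alg:arrow-multiply} on the data distribution of \Cref{fig:distribution}, observe that it performs only two collective communication operations, each on messages of size $bk$, and bound each of them with a standard tree-based collective.

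First I would pin down the distribution. With $p=\ceil{n/b}$, the matrix $\mat{B}$ splits into a $p\times p$ grid of $b\times b$ tiles $\mat{B^{(i,j)}}$, and by the arrow structure (with the diagonal band made block-diagonal, as in \Cref{sec:spmm-algorithm}) the only possibly nonzero tiles are the row tiles $\mat{B^{(0,j)}}$, the column tiles $\mat{B^{(i,0)}}$, and the diagonal tiles $\mat{B^{(i,i)}}$. Processor $i$ holds exactly $\mat{B^{(0,i)}}$, $\mat{B^{(i,0)}}$, $\mat{B^{(i,i)}}$ and the slice $\mat{D^{(i)}}\in R^{b\times k}$, which has $bk$ entries. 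I would then verify correctness of \Cref{alg:arrow-multiply}: the output row tile satisfies $\mat{C^{(0)}}=\sum_j \mat{B^{(0,j)}}\mat{D^{(j)}}$, which is assembled from the local products $\mat{B^{(0,i)}}\mat{D^{(i)}}$ followed by the reduce to root $0$; and for $r>0$ the only nonzero tiles in row block $r$ are $\mat{B^{(r,0)}}$ and $\mat{B^{(r,r)}}$, so $\mat{C^{(r)}}=\mat{B^{(r,0)}}\mat{D^{(0)}}+\mat{B^{(r,r)}}\mat{D^{(r)}}$, which rank $r$ can evaluate locally once it has received $\mat{D^{(0)}}$ from the broadcast. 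Consequently, the only communication is one broadcast of $\mat{D^{(0)}}$ (size $bk$) from root $0$ and one reduce of $\mat{C^{(0)}}$ (size $bk$) to root $0$; every other line of the algorithm is purely local.

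Next I would bound the two collectives in the $\alpha$-$\beta$ model using a binomial spanning tree over the $p$ processors. The broadcast proceeds in $\ceil{\log p}$ rounds; the longest chain of dependent messages (root down to a leaf) has length $\ceil{\log p}$, contributing latency $O(\alpha\log p)$, while every processor sends at most $\ceil{\log p}$ messages of size $bk$, contributing bandwidth cost $O(\beta\,bk\log p)$. The reduce of the $p$ partial results of size $bk$ along the same tree is symmetric: each processor takes part in at most $\ceil{\log p}$ send/receive steps on $bk$-sized messages, giving $O(\alpha\log p+\beta\,bk\log p)$ as well. Adding the two phases (the reduce depends on the broadcast through the local products, so the latency chains concatenate and the totals add, changing the bound only by a constant factor) yields the claimed $O(\alpha\log p+\beta\,bk\log p)$.

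The only part needing care is the bookkeeping in the first paragraph: confirming that the block-diagonal arrow layout really does leave only the three families of nonzero tiles, so that rank $r>0$ needs nothing beyond $\mat{D^{(0)}}$ and its own local data. Once that is established the communication accounting is routine; in fact a pipelined broadcast/reduce would even remove the $\log p$ factor from the bandwidth term, but the coarser tree-based bound already suffices for the statement.
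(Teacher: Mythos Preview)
Your proposal is correct and matches the paper's approach: both argue that the arrow structure leaves only row, column, and diagonal tile families, so the sole communication is a broadcast of $\mat{D^{(0)}}$ and a reduce of the partial $\mat{C^{(0)}}$, each costing $O(\alpha\log p + \beta\,bk\log p)$ via standard tree-based collectives. Your write-up is in fact slightly cleaner than the paper's, since you follow \Cref{alg:arrow-multiply} and the distribution of \Cref{fig:distribution} directly, whereas the paper phrases things in terms of $\lceil 2n/b\rceil-1$ tasks and adds a separate pairwise step to route the diagonal slices of $\mat{X}$; under your distribution that step is unnecessary because rank $i$ already owns $\mat{D^{(i)}}$.
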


\begin{proof}
Recall that we have $\ceil{\frac{n}{b}}$ row tiles $\mat{B_{0,j}}$, $\ceil{\frac{n}{b}}$ column tiles $\mat{B_{i,0}}$ and $\ceil{\frac{n}{b}}$ diagonal tiles $\mat{B_{i,i}}$. Let $\mat{X}$ also be split row-wise into $\ceil{\frac{n}{b}}$ blocks of size $b \times k$. We distribute the calculation of $\mat{BX}$ as follows: For each row tile $\mat{B_{0,j}}$, there is a processor responsible for calculating $\mat{B_{0,j}}\mat{X_{j}}$. The intermediate results are reduced and summed at one node. For each column tile $\mat{B_{i,0}}$ with $i>0$, we have a processor responsible for calculating $\mat{B_{i,0}}\mat{X_{0}} + \mat{B_{i,i}}\mat{X_{i}}$. Due to the arrow shape, we only have two non-zero tiles per row when $i>1$. Hence, we can do the calculation of the entire row on one processor. 

Overall, we have $\ceil{\frac{2n}{b}}-1$ computation tasks which we assign to our $p$ processors. 
We assume that the tiles of $\mat{B}$ are already correctly distributed as they remain fixed throughout the iterations. Note that half of the computation tasks will require a copy of $\mat{X_0}$, i.e. we will need one broadcast of $\mat{X_0}$ to half of the processors which incurs a communication cost of $O(\alpha \log p + \beta\;bk \log p)$. The reduce-operation for the row tiles incurs the same cost (the reduce-operation also involves $\frac{p}{2}$ processors since the row tiles make up half of the computation tasks). Lastly, we will need to send the diagonal blocks of $\mat{X}$ to the right processors using pairwise communication. Since each processor only needs a single block from $\mat{X}$, this only incurs a cost of $O(\alpha + \beta\;bk)$.
\end{proof}

\paragraph{Multiplying with an Arrow Decomposition}

Next, we analyze the communication cost of combining the intermediate products $\mat{B}_i ( \mat{P}_{\pi_i} ^{\top}  \mat{X} )$ of our matrix decomposition to finally arrive at $\mat{Y}=\mat{A}\mat{X}$. To improve the efficiency of multiplying repeatedly with the same matrix $\mat{A}$, we leave the rows of $\mat{Y}$ permuted in the order of the first matrix in the decomposition. In the end, we might need to permute back to the original order of rows depending on the application. If the result is required in the original order of rows, the communication cost of this permutation is fully amortized after at most $\log^2 p$ iterations of multiplying with $\mat{A}$. 
The key insight is that when the number of nonzeros decreases quickly, we can implement the permutations for the aggregation more efficiently than doing a naive all-to-all. 
\begin{theorem}\label{lem:distributed-matrix-iteration}
For a matrix $\mat{A}\in R^{n\times n}$ with an $x$-compacting $b$-arrow decomposition and a feature matrix $\mat{X} \in R^{n\times k}$, if $x\geq \Omega(\log ^2 p)$ and $p=\Theta(\frac{n}{b})$,  computing $\mat{P}_{\pi_0}^{\top}\mat{Y}=\mat{P}_{\pi_0}^{\top}\mat{A}\mat{X}$ has a communication cost of $O(\alpha \log^2 p + \beta \frac{nk}{p})$. 
\end{theorem}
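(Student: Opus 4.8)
The plan is to analyze Algorithm~\ref{alg:arrow-dec-multiply} phase by phase and sum the costs along the longest chain of dependent messages. There are three kinds of communication: (1) the internal communication within each arrow matrix $\mat{B}_j$ when it computes its local product via Algorithm~\ref{alg:arrow-multiply}; (2) the forward ``pipeline'' that propagates the relevant rows of $\mat{X}$ from matrix $1$ to matrix $l$, applying the permutation $\pi_{j+1}\circ\pi_j^{-1}$ at each hop; and (3) the backward aggregation that sums the partial results $\mat{P}_{\pi_i}\mat{Y}_i$ in reverse order. For (1), Lemma~\ref{lem:distributed-arrow-spmm} already gives $O(\alpha\log p + \beta\, bk\log p)$ per arrow matrix, and since $b = \Theta(n/p)$ this bandwidth term is $O(\beta \frac{nk}{p}\log p)$; I will want to absorb the $\log p$ factor, so I would either re-examine this bound in the decomposition setting (only one broadcast/reduce happens along any dependency chain, and $bk\log p \le \frac{nk}{p}\log p$ is already within the claimed $\beta \frac{nk}{p}$ only up to a log, which suggests the theorem statement is tolerating the extra log inside the $O(\log^2 p)$ latency but needs care on bandwidth) — more likely the intended reading is that the dominant bandwidth term $\beta \frac{nk}{p}$ comes from the data actually touched, and the $\log$ factors are charged to latency. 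I would state this carefully.

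The crux is step (2)/(3): bounding the cost of the permutations used to move rows between consecutive matrices in the decomposition, and showing it is $O(\beta\frac{nk}{p})$ in total rather than $O(\beta\frac{nk}{p}\log p)$ or worse. The key structural fact to exploit is that the decomposition is $x$-compacting with $x \ge \Omega(\log^2 p)$: the number of nonzero rows of $\mat{B}_i$ — equivalently the number of processors/tiles actively assigned to matrix $i$ — shrinks by a factor of at least $x$ at each level. So matrix $i$ involves at most $O(p / x^{i-1})$ tiles. When matrix $j$ forwards data to matrix $j+1$, only the rows that $\mat{B}_{j+1}$ actually reads need to be sent, and there are only $O(bk \cdot p/x^{j})$ words of those; summing the geometric series over all $l = O(1 + \log_x n)$ levels gives total forward volume $O(bk \cdot p) = O(nk)$ words, hence $O(\beta \frac{nk}{p})$ per processor after balancing, plus $O(\alpha)$ latency per hop for $O(l) = O(\log p)$ hops. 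The same argument applies symmetrically to the backward aggregation. The one subtlety is the permutation itself: a general permutation of $p$ blocks costs an all-to-all with $\Theta(\log p)$ latency, but because the \emph{target} set at level $j+1$ has only $O(p/x^j)$ blocks, each hop is really a scatter/gather-style routing into a small set, costing $O(\log p)$ latency and volume proportional only to the small side — this is where $x \ge \Omega(\log^2 p)$ is used to ensure the geometric sum of latencies, $\sum_j O(\log p)$ over the levels where the block count is still $\ge p/\mathrm{poly}(\log p)$, stays $O(\log^2 p)$.

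The main obstacle I anticipate is making the permutation-routing cost rigorous: one must argue that shuffling from the $\pi_j$-order layout on $\Theta(p/x^{j-1})$ blocks to the $\pi_{j+1}$-order layout on $\Theta(p/x^{j})$ blocks can be done with latency $O(\log p)$ and bandwidth proportional to $bk$ times the smaller block count, \emph{without} a full $p$-party all-to-all. The clean way is to note that each destination block needs contributions from at most $O(1)$ source blocks of $\mat{X}$ after the compaction has collected nonzeros at the top (each processor holds one $b\times k$ block and sends/receives $O(1)$ of them per hop, by the layout in Figure~\ref{fig:distribution}), so the per-hop cost is $O(\alpha + \beta\, bk)$ for the point-to-point part plus $O(\alpha\log p)$ for any collective needed to realize the permutation among the active subset; over $O(\log p)$ hops and with the per-level active counts decaying geometrically, latency totals $O(\log^2 p)$ and bandwidth totals $O(\beta\frac{nk}{p})$. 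I would then combine: forward pipeline $O(\alpha\log^2 p + \beta\frac{nk}{p})$, local arrow products $O(\alpha\log p + \beta\frac{nk}{p})$ (the dominant level contributes $\Theta(nk/p)$ words; deeper levels are geometrically smaller), backward aggregation $O(\alpha\log^2 p + \beta\frac{nk}{p})$, for a total of $O(\alpha\log^2 p + \beta\frac{nk}{p})$, which is the claim; the final permutation back to original order, if needed, is a single all-to-all of $O(nk/p)$ words per processor at cost $O(\alpha\log p + \beta\frac{nk}{p})$ and is amortized over $\Omega(\log^2 p)$ iterations as remarked before the theorem.
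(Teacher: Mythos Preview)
Your overall architecture is right: cost the local arrow products via Lemma~\ref{lem:distributed-arrow-spmm}, then show that the forward/backward routing between consecutive matrices sums geometrically thanks to $x$-compacting. The gap is in the permutation-routing step. Your proposed resolution---that ``each destination block needs contributions from at most $O(1)$ source blocks''---is false in general: the permutations $\pi_i$ are arbitrary (they come from whatever linear arrangement was computed), so the $b$ rows held by a single destination block in $\pi_{j+1}$-order may be scattered across up to $\min(b,|P_j|)$ source blocks in $\pi_j$-order. Nothing about the arrow shape or about compaction guarantees any locality of the inter-level permutation, so the point-to-point picture you sketch does not go through, and an unstructured all-to-all here would blow up either latency or bandwidth.

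The paper's fix is different from what you outline and is the actual source of the $\log^2 p$ latency term. Before the scatter from $P_{i+1}$ to $P_i$, the processors of $P_{i+1}$ first \emph{sort} their rows by destination processor in $P_i$, using a bitonic sorting network of depth $O(\log^2 p)$ among the $|P_{i+1}|$ processors. After sorting, each source holds rows destined for a contiguous range of processors in $P_i$, and consecutive sources' ranges overlap in at most one destination, so the scatters can be scheduled in two non-conflicting phases (even-indexed then odd-indexed senders). Thus the $\alpha \log^2 p$ latency comes from one level's sort, not from summing $O(\log p)$ per-hop latencies over $O(\log p)$ hops as you suggest; and the condition $x\ge \Omega(\log^2 p)$ is precisely what lets the $\log^2 p$ bandwidth overhead introduced by the sorting network be absorbed by the compaction factor so that the aggregated bandwidth across all levels stays $O(\beta\, nk/p)$.
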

\begin{proof}
    Assuming we calculated $\mat{Y}_i=\mat{B}_i ( \mat{P}_{\pi_i} ^{\top} \mat{X} )$ for each $\mat{B_i}$ matrix in our decomposition using \Cref{lem:distributed-arrow-spmm}, it remains to aggregate the results. 
We need to sum the $\mat{Y_i}$ matrices, however, each of them has its rows permuted differently. Let $\mat{Y}_i$ and $\mat{Y}_{i+1}$ be any two of these matrices that we want to sum and let $P_i$ be the set of processors among which the blocks of $\mat{Y}_i$ are distributed.

If we were to send all the rows of $\mat{Y_{i+1}}$ to their corresponding processor in $P_i$ naively, in the worst case we would have to perform one scatter operation for each processor in $P_{i+1}$ to all the processors in $P_i$. 
To alleviate this, we will first sort the rows of  $\mat{Y_{i+1}}$ by their destination processor in $P_i$. Then, for any $p^{i+1}_j \in P_{i+1}$, it holds that if it stores rows of $\mat{Y_{i+1}}$ that need to be sent to processors $p^{i}_t... p^{i}_{t+k}$ then for any processor $p^{i+1}_l$ with $l>j$, it holds that it only stores rows of $\mat{Y_{i+1}}$ that need to be sent to $p^{i}_{m}... p^{i}_{m+k'}$ where $m \geq t+k$. 

Note that we can determine the destination processor for each row in advance since the permutation matrices are fixed in the pre-processing. We can also pre-determine the destination processor range of each processor in $P_{i+1}$ this way. After sorting, we can schedule the scatter operations much more efficiently in parallel: Observe that each processor logically receives a message from at most two scatter operations and that these come from neighboring processors in $P_{i+1}$. Hence, we can perform the scatter operations in two phases involving first all evenly-indexed processors in $P_{i+1}$ and then oddly-indexed processors in $P_{i+1}$.
This scattering then incurs a communication cost of $O(\alpha \log p + \beta\; \frac{nk}{c} \log p)$.

As for the sorting itself, we can employ a sorting network and place each processor of $P_{i+1}$ on a wire. When two processors $p^{i+1}_k$ and $p^{i+1}_l$ with $k < l$ are connected, $p^{i+1}_k$ will send $p^{i+1}_l$ all its rows that are above its range and $p^{i+1}_l$ will send $p^{i+1}_k$ all its rows that are below. For a bitonic sorting network with depth $O(\log^2 p)$, this leads to a communication cost of  $O(\alpha \log^2 p + \beta\; \frac{nk}{c} \log^2 p)$. Sorting networks with a smaller depth exist but are unwieldy in practice.

Overall, we can thus perform the sorting and the subsequent scattering with a communication cost of $O(\alpha \log^2 p + \beta\; \frac{nk}{c} \log^2 p)$. Assuming we have a $c$-compacting decomposition, recall that it holds that the total number of non-zeros of $Y_{i+1}$ is at most $\frac{1}{c}$ times that of $Y_{i}$. Thus, if we perform the summation of all $l$ parts in decreasing order, the total communication can be upper-bounded by the summation of $Y_0$ and $Y_1$, i.e., $O(\alpha \log^2 p + \beta\; \frac{nk}{c} \log^2 p)$. For $c\geq \Omega(\log ^2 p)$, we arrive at a communication cost of $O(\alpha \log^2 p + \beta \frac{nk}{p})$ for one matrix iteration, as desired.

To prepare the next iteration, the accumulated result is distributed in the reverse communication pattern (replacing scatters with gather). This concludes the calculation of $\mat{P}_{\pi_0}^{\top}\mat{A}\mat{X}$.
\end{proof}
This result shows that given an adequate arrow decomposition, we obtain a communication cost that improves on a fully replicated 1.5D decomposition by a factor $\sqrt{p}$ in terms of bandwidth at the cost of a $\log p$ factor in terms of latency. Compared to a 1D decomposition, the latency cost is a factor $\Omega(\frac{p}{\log p})$ smaller. Compared to an $\mat{A}$-stationary 2D decomposition, the bandwidth cost is a factor $\Theta(\sqrt{p})$ smaller and the latency cost is a factor $\Theta\left(\frac{\sqrt{p}}{\log n}\right)$ smaller.

\subsection{Storage Requirements}

We propose to store the matrix $\mat{A}$ in a sparse format, such as compressed sparse row (CSR), and store the dense matrices in row-major.
Because an arrow matrix has fewer blocks compared to a matrix that has been 1.5D decomposed directly, we can afford to have smaller blocks with the same number of processors. This removes the main downside of a 1.5D decomposition with full replication, namely its high storage requirements.
\begin{lemma}\label{lem:arrow-storage}
Consider a matrix $\mat{A}\in R^{n\times n} $ with an $x$-compacting $b$-arrow decomposition and $\mat{X} \in R^{n\times k}$. If $x \geq 1+\epsilon$ for a constant $\epsilon$ and the blocks of $\mat{A}$ are stored in CSR, the total storage cost is $m + O(nk)$.
\end{lemma}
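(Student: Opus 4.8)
The plan is to split the total storage into three contributions and bound each, using two facts available from earlier in the paper: an $x$-compacting decomposition has order $l = O(1 + \log_x n)$, and its blocks partition the nonzeros of $\mat{A}$ (as all our constructions via \textsc{LA-Decompose} do), so that $\sum_{i=1}^{l}\nnz{\mat{B}_i} = m$. The three contributions are (i) the sparse blocks $\mat{B}_1,\dots,\mat{B}_l$ stored in CSR; (ii) the dense data the algorithm keeps live, namely the feature matrix, the permuted copies $\mat{P}_{\pi_i}^{\top}\mat{X}$ that each matrix consumes, the replicated top block of the feature matrix, and the partial products $\mat{Y}_i$; and (iii) the per-processor slices of the permutations $\pi_i$ and $\pi_i^{-1}$.

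For (i), I would collect the nonzeros of each $\mat{B}_i$ at the top, so that $\mat{B}_i$ behaves as an $r_i \times r_i$ matrix, where $r_i$ is its number of nonzero rows; note $r_i \le \nnz{\mat{B}_i}$ because every nonzero row contains at least one nonzero. Storing $\mat{B}_i$ in CSR costs $2\,\nnz{\mat{B}_i} + r_i + O(1)$ words (values, column indices, and the row-pointer array), and distributing it into $b\times b$ tiles over the $p = \Theta(n/b)$ processors adds only $O(b)$ pointer padding per nonzero tile, hence $O(\sum_i r_i + lb)$ overall. Summing over $i$, the values account for exactly $m$ words, and the remaining overhead is $\sum_i\bigl(\nnz{\mat{B}_i} + r_i\bigr) + O(lb + n) = O(m + n)$. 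Contribution (iii) is of the same order: a processor owning a tile of $\mat{B}_i$ keeps only the $O(b)$ entries of $\pi_i$ and $\pi_i^{-1}$ indexing that tile, for a total of $O\bigl(\sum_i r_i + lb\bigr) = O(m + n)$.

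For (ii), matrix $1$ holds the input $\mat{X}$ spread over its processors ($nk$ words), and matrix $i$ holds the at most $r_i$ rows of $\mat{P}_{\pi_i}^{\top}\mat{X}$ that its tiles touch, the broadcast $b\times k$ top block replicated across its $O(r_i/b)$ processors, and its partial output $\mat{Y}_i$ (again with at most $r_i$ nonzero rows), i.e.\ $O(r_i k)$ words for matrix $i$. The one non-routine step is to sum $\sum_i r_i$: by $x$-compacting, $r_i \le \nnz{\mat{B}_i} \le \nnz{\mat{B}_1}\,x^{-(i-1)} \le m\,x^{-(i-1)}$ for $i \ge 2$, so $\sum_{i \ge 2} r_i \le m/(x-1) = O(m)$ precisely because $x \ge 1+\epsilon$; together with $r_1 \le n$ this gives $\sum_i r_i = O(n + m)$, and hence a dense-storage cost of $O\bigl((n+m)k\bigr)$.

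Adding the three contributions gives a total of $m + O\bigl(n + m + (n+m)k\bigr)$, which collapses to $m + O(nk)$ in the regime where the feature matrix dominates the storage ($m \ll nk$; for the sparse graph families targeted here one in fact has $m = O(n)$, so the residual $O(mk)$ term is absorbed into $O(nk)$). The main obstacle — and the only place the hypothesis $x \ge 1+\epsilon$ is used — is the bound on $\sum_i r_i$: estimating the number of nonzero rows of $\mat{B}_i$ by $\nnz{\mat{B}_i}$ and then invoking $x$-compacting to turn the sum into a convergent geometric series. Everything else is routine bookkeeping about the CSR layout and the per-tile distribution depicted in \Cref{fig:distribution} and used in \Cref{alg:arrow-multiply}.
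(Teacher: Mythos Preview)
Your proposal is correct and follows essentially the same approach as the paper: use the $x$-compacting property to make $\sum_i r_i$ a geometric series (this is exactly where $x\ge 1+\epsilon$ enters), so that the replicated dense data is $O(nk)$, the CSR row/index overhead is $O(n)$, and the value arrays contribute exactly $m$. Your write-up is in fact more careful than the paper's terse proof—you make explicit that the clean $m+O(nk)$ form relies on $m=O(n)$ (or $m\ll nk$), an assumption the paper uses implicitly throughout.
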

\begin{proof}
The number of nonzero rows in the matrices of the decomposition decreases geometrically because the decomposition is $x$-compacting. Hence, the cost to store $\mat{X}$ is $O(nk)$ and the cost to store row and index pointers for $\mat{A}$ is $O(n)$. Since each edge occurs in exactly one matrix of the decomposition, the value arrays take up $m$ space.
\end{proof}

The storage matches that of a 2D decomposition. Compared to a 1.5D decomposition with replication factor $c$, the storage used by the dense matrices is a factor $\Theta(c)$ smaller. For full replication, this constitutes a factor $\Theta(\sqrt{p})$ saving. In conclusion, the arrow matrix decomposition shares the favorable space requirements of 1D and 2D decompositions while improving on the bandwidth cost and memory requirements of a fully replicated 1.5D decomposition.


\section{Evaluation}\label{sec:experiments}

We benchmark the scalability of sparse matrix multiplication with our decomposition compared to a 1.5D decomposition and a 1D hypergraph partitioning decomposition. 

\subsection{Setup}

\minisec{System}
 We ran our SpMM experiments on the Piz Daint supercomputer on the Cray XC50 nodes with Aries routing and a dragonfly network topology. Each node has a 12-core HT-enabled Intel Xeon E5-2690 v3 CPU with 64GB of RAM and one NVIDIA Tesla P100 GPU with 16GB of memory.
For decomposing the graphs, we ran our algorithm on single nodes with 4 Intel(R) E7-4830 v4 CPUs and 2TB of memory, and 2 AMD EPYC 7742 CPUs with 512GB of memory.

\minisec{Datasets}
We evaluate our approach on sparse graph datasets with 18-226 million rows and up to a $1.9$ billion nonzeros from the SuiteSparse Matrix Collection~\cite{10.1145/2049662.2049663}. 
See \Cref{tab:dataset-properties} for a summary of the dataset's characteristics. We considered all matrices with at least 50M rows and fewer than 100 nonzeros per row on average. The denser graphs cause out-of-memory issues and timeouts with both the baselines and our approach. 
For the feature matrices, we use $128$, $64$, and $32$ columns. 

\begin{table}[]
\caption{Summary of the datasets' density properties.}
\vspace{-0.8em}
    \centering
    \def\arraystretch{0.97}
    \setlength\tabcolsep{0.4em}
    \begin{tabular}{llll}
    \toprule
    Dataset & Vertices $n$ & $\frac{\text{nnz}(A)}{n}$ & Max. Degree $\Delta$ \\ \midrule
    MAWI 226M & 226,196,185 & 2.12 & 210,795,477  \\
    MAWI 69M & 68,863,315 & 2.08 & 63,040,326  \\
    GenBank 214M & 214,005,017 & 2.17 & 8  \\
    GenBank 68M   & 67,716,231 & 2.05 & 35  \\
    WebBase 118M & 118,142,155 & 8.63 & 816,127  \\
    OSM Europe & 50,912,018 & 2.12 & 13 \\
    GAP-twitter 62M & 61,578,415 & 23.85 & 770,155 \\
    sk-2005 51M &  50,636,154  & 38.50 & 8,563,808 \\
    \bottomrule
    \end{tabular}
    \vspace{-1em}
    \label{tab:dataset-properties}
\end{table}

\minisec{Implementation} We implement all SpMM algorithms as a Python module, using \textit{numpy} v1.24.2~\cite{harris2020array}, \textit{scipy} v1.10.1~\cite{2020SciPy-NMeth}, \textit{cupy} v11.6~\cite{cupy_learningsys2017}, and \textit{mpi4py} v3.1.4~\cite{DALCIN20051108}. 
The single-block intra-node GPU SpMM operations are implemented with the \emph{CSR}-times-dense matrix multiplication (CSRMM) kernels found in the 
NVIDIA cuSPARSE (v11.0). Our experiments use Cray-MPICH v7.7.18. 
The code is available at: \href{https://github.com/spcl/arrow-matrix}{https://github.com/spcl/arrow-matrix}
%

Our approach utilizes the linear arrangement framework (\Cref{sec:la-decompose}) with pruning (\Cref{sec:pruning}). We construct the linear arrangements using the random spanning MSTs algorithm (\Cref{sec:mla-random-forests}). For each tree, we employ the algorithm from \Cref{sec:tree-arrow-decomposition}. The decomposition uses \textsc{Julia} v.1.9.3. Our implementation may leave a few ranks unused when the block size does not evenly divide the matrix size.   

\minisec{1.5D Baseline}
We compare our approach based on the 1.5D decomposition of an arrow matrix decomposition against the direct 1D and 1.5D decomposition schemes. We use the same libraries and kernels to ensure a comparison that focuses on the merits of the decomposition schemes. Similarly as Tripathy at et al.~\cite{DBLP:conf/sc/TripathyYB20} we include a parameter $c$ that interpolates between the $1D$ decomposition ($c=1$) and the single-round 1.5D decomposition ($c=\sqrt{p}$). As shown in \Cref{fig:15d_baseline_c}, a larger $c$ leads to lower runtimes, as expected from the theory. 
Hence, we use $c=\left\lfloor\sqrt{p}\right\rfloor$ in our experiments, resulting in one or two computational rounds. 
For larger feature sizes, the whole data sometimes do not fit in GPU memory. In this case, we further tile the single-round computation into smaller blocks. 
This approach is significantly faster than lowering $c$ since host-to-device memory transfers are faster than the network.

\begin{figure}
    \centering
  \begin{minipage}[t]{0.15\textwidth} 
    \centering
    \includegraphics[width=\textwidth, trim=0.5cm 0cm 0cm 0cm]{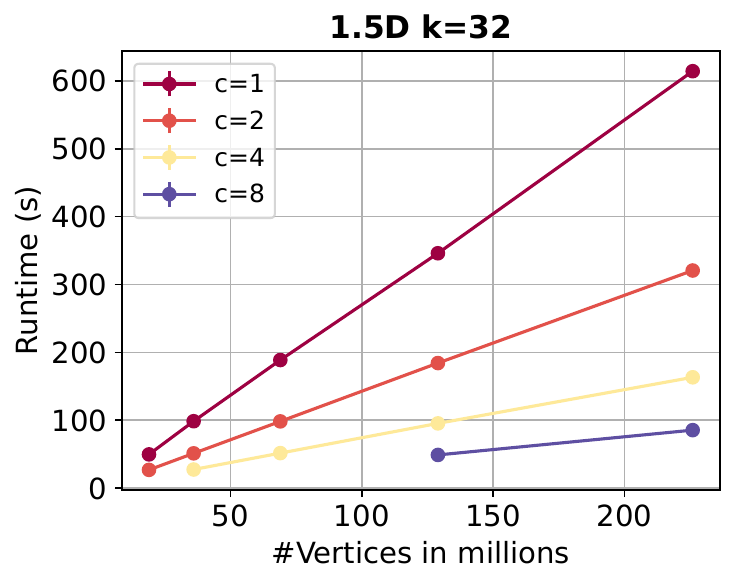}
  \end{minipage}
  \begin{minipage}[t]{0.15\textwidth} 
    \centering
    \includegraphics[width=\textwidth, trim=0.5cm 0cm 0cm 0cm]{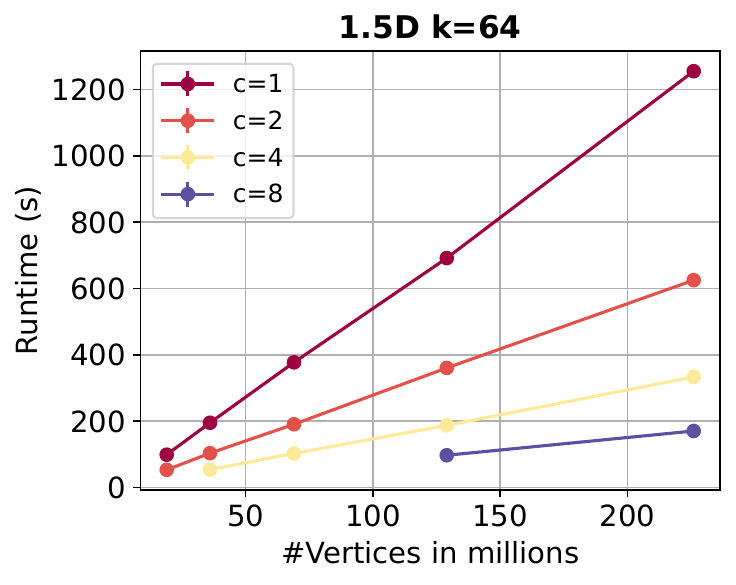}
  \end{minipage}
  \begin{minipage}[t]{0.15\textwidth} 
    \centering
    \includegraphics[width=\textwidth, trim=0cm 0cm 0.5cm 0cm]{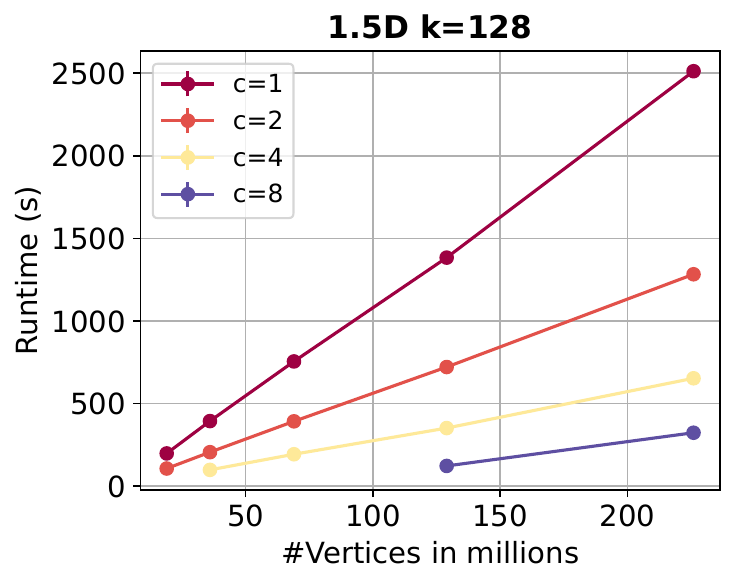}
    \label{fig:figure2} 
  \end{minipage}
  \vspace{-2em}
    \caption{Weak scaling of the 1D / 1.5D baseline for varying replication factors $c$ on the MAWI datasets.}
  \vspace{-1em}
    \label{fig:15d_baseline_c}
\end{figure}
\minisec{Hypergraph Partitioning Baseline} Additionally, we compare our implementation against a 1D hypergraph partitioning scheme (HP-1D). We adapt the PETSc-style variant from previous work on SpMV~\cite{DBLP:conf/ppam/KayaUC13} to the SpMM setting. The matrix is permuted according to a hypergraph partitioning. The hypergraph contains a vertex for each row $i$ and a net for each column $j$ that connects all vertices $i$ for which the matrix has a nonzero entry in row $i$ and column $j$. To partition the hypergraphs, we use HYPE~\cite{DBLP:conf/bigdataconf/MayerMBER18}. After permuting, the matrices are split row-wise in 1D. The computation consists of two parts; a local SpMM that requires no communication and a non-local SpMM for which features of other processors are needed. The local SpMM can overlap with the message exchange, hiding computational costs. We implement this overlap using MPI nonblocking send and receive. 

\minisec{Measurements}
We run each SpMM for at least $7$ iterations and drop the first iteration, as it includes GPU and library initialization costs. We report the mean over the iterations of the maximum runtime of any participating rank. Additionally, we show the minimum and maximum over the iterations when they deviate more than $5\%$ from the mean.

\subsection{Decomposition Results}

We executed our random forests algorithm on each dataset, varying the arrow width as $b\in {0.5 \times 10^6, 1 \times 10^6, \dots, 5 \times 10^6 }$. This approach yielded, at most, $4$ matrices in the decomposition for all datasets. Additionally, the second matrix in the decomposition contained between a few hundred and $25$ million nonzero rows, constituting less than $0.1\%-13\%$ of the rows. This aligns with our assumptions in \Cref{lem:distributed-matrix-iteration}, affirming our algorithm's generation of highly compact decompositions for these sparse datasets.

Refer to Figure \ref{fig:decomposition-illu} for a depiction of the first matrix in the decomposition (with analogous outcomes for all MAWI and GenBank matrices). Notably, the datasets exhibit distinct characteristics: In the MAWI data, most nonzeros cluster in the 'pruned' part near the matrix's top and left corners. Conversely, for the GenBank and OSM Europe data, the majority of nonzeros appear in the diagonal band. The Webbase and GAP-twitter datasets showcase a notable number of nonzeros in the pruned segment, yet they are less skewed than the MAWI datasets. These patterns correspond with the datasets' maximum degrees. Specifically, the MAWI datasets feature a nearly equal maximum degree and vertex count due to the prevalence of large star subgraphs. In contrast, the Webbase and GAP-twitter datasets have a maximum degree of $8-13\%$ of the vertex count. The GenBank k-mer and OSM Europe datasets, on the other hand, display a maximum degree at most $20$ times the average. Our approach handles these diverse matrix behaviors robustly. 

\minisec{Comparison with 1.5D}
%
We contrast the count of nonzero blocks in our arrow decomposition matrices with those in a 1.5D decomposition employing equally sized blocks. As matrix rows increase, our method uses notably fewer nonzero blocks. For the two largest datasets, our decomposition results in $15$-$20$ times fewer nonzero blocks at $b=5 \times 10^6$, and over 100 times fewer at $b=10^6$. Our SpMM experiments demonstrate this improves the scalability of our approach. 

\minisec{Comparison with Hypergraph partitioning}
HYPE was able to effectively partition the graphs with low maximum degree. On the GAP-twitter and sk-2005 graphs, the results are mixed. Especially for smaller number of partitions ($\leq 32$) we observe a high partition cost. We attribute this to the power-law distribution of those graphs, which makes them challenging to partition in a balanced way. On the MAWI series of graphs, the partitioning is completely ineffective as it leads to one partition being connected to an overwhelming majority of the other vertices. This is because fundamentally, the partitioning cost is lower bounded by the maximum degree, which we overcome using pruning.

\subsection{SpMM Experiments}

\minisec{Strong Scaling}
\begin{figure}
    \centering
    \vspace{-1em}
    \begin{subfigure}[t]{0.48\linewidth}
        \includegraphics[width=\textwidth, trim=0.25cm 0.25cm 0.25cm 0cm, clip]{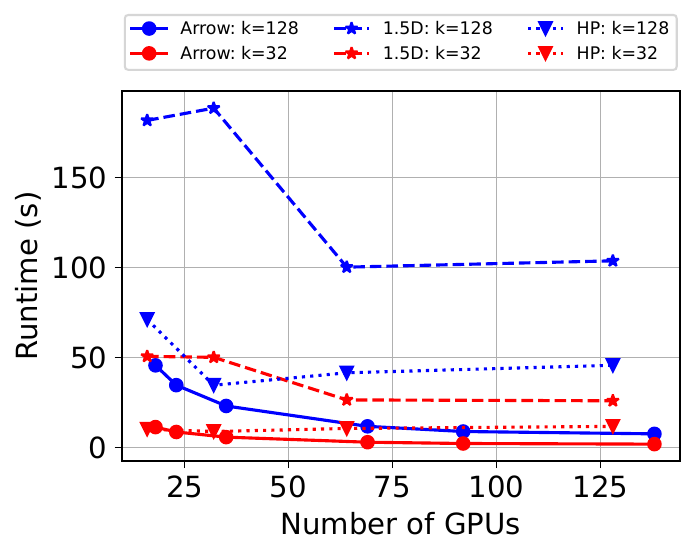} 
        \caption{MAWI 69M}
    \label{fig:mawi-69M} 
    \end{subfigure}    
    \begin{subfigure}[t]{0.48\linewidth}
        \includegraphics[width=\textwidth, trim=0.25cm 0.25cm 0.25cm 0cm, clip]{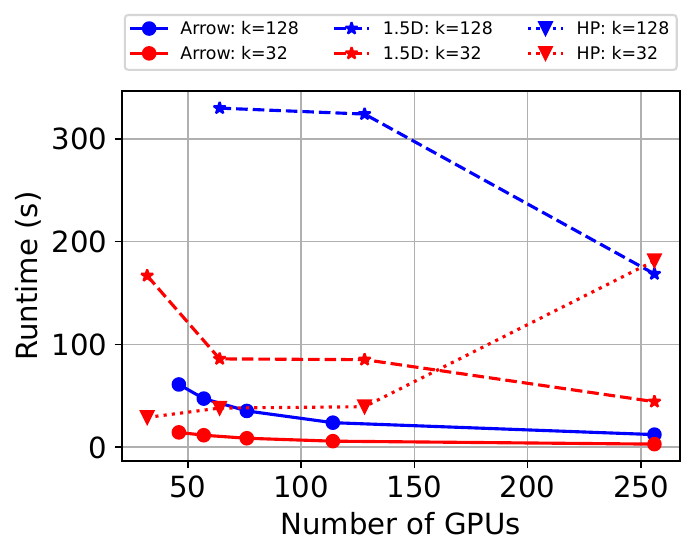}
        \caption{MAWI 226M} 
        \label{fig:mawi-226M} 
        \vspace{0.5em}
    \end{subfigure}
    \begin{subfigure}[t]{0.48\linewidth}
        \centering
        \includegraphics[width=\textwidth, trim=0.25cm 0.25cm 0.25cm 1.5cm, clip]{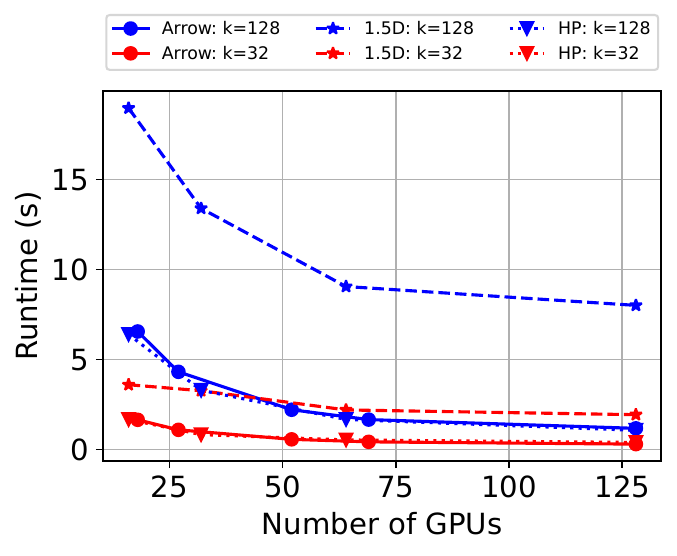}
        \caption{OSM Europe 51M} 
        \label{fig:europe-osm} 
    \end{subfigure}
    \begin{subfigure}[t]{0.48\linewidth}
        \includegraphics[width=\textwidth, trim=0.25cm 0.25cm 0.25cm 1.5cm, clip]{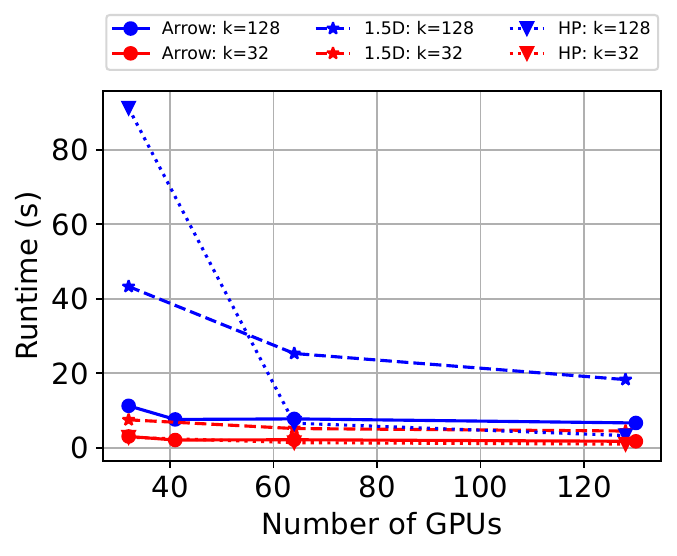}
        \caption{Webbase 118M} 
        \label{fig:webbase} 
        \vspace{0.5em}
    \end{subfigure}
    \begin{subfigure}[t]{0.48\linewidth}
        \includegraphics[width=\textwidth, trim=0.25cm 0.25cm 0.25cm 1.5cm, clip]{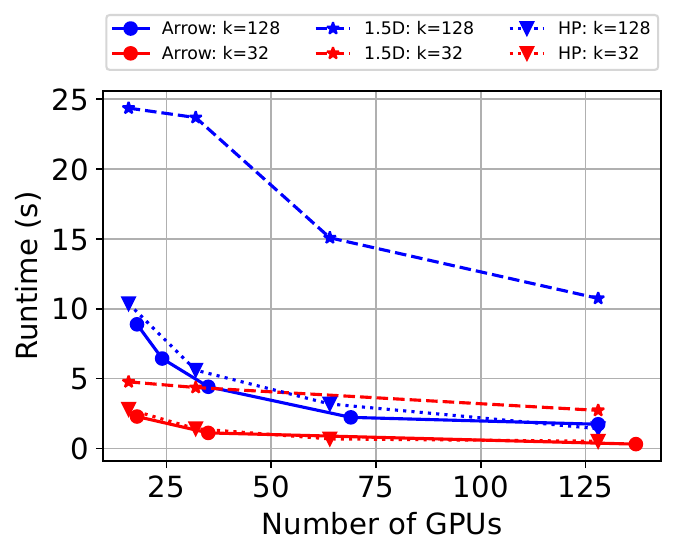}
        \caption{GenBank 68M} 
    \label{fig:genbank-68M} 
    \end{subfigure}
    \begin{subfigure}[t]{0.48\linewidth}
        \includegraphics[width=\textwidth, trim=0.25cm 0.25cm 0.25cm 1.5cm, clip]{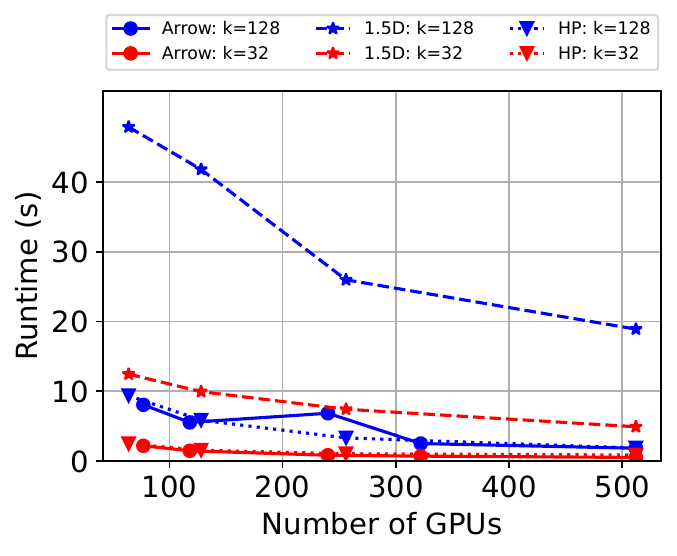}
        \caption{GenBank 214M} 
    \label{fig:genbank-214M} 
    \end{subfigure}
    \begin{subfigure}[t]{0.48\linewidth}
        \includegraphics[width=\textwidth, trim=0.25cm 0.25cm 0.25cm 1.5cm, clip]{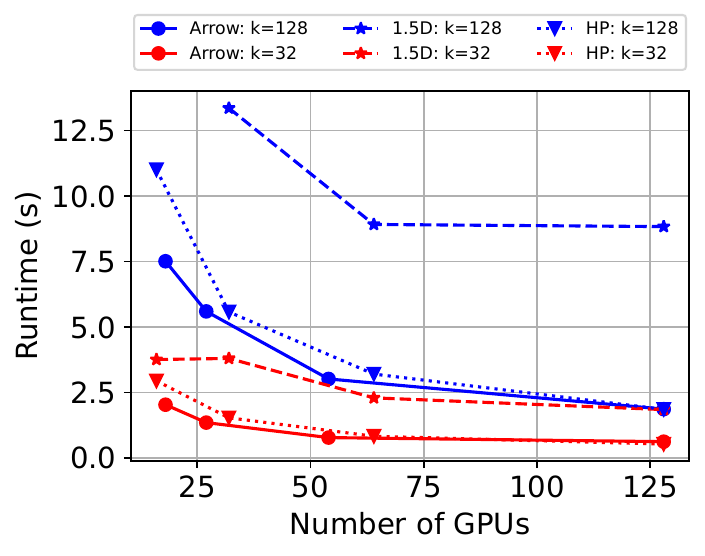}
        \caption{sk-2005 51M} 
    \label{fig:genbank-68M} 
    \end{subfigure}
    \begin{subfigure}[t]{0.48\linewidth}
        \includegraphics[width=\textwidth, trim=0.25cm 0.25cm 0.25cm 1.5cm, clip]{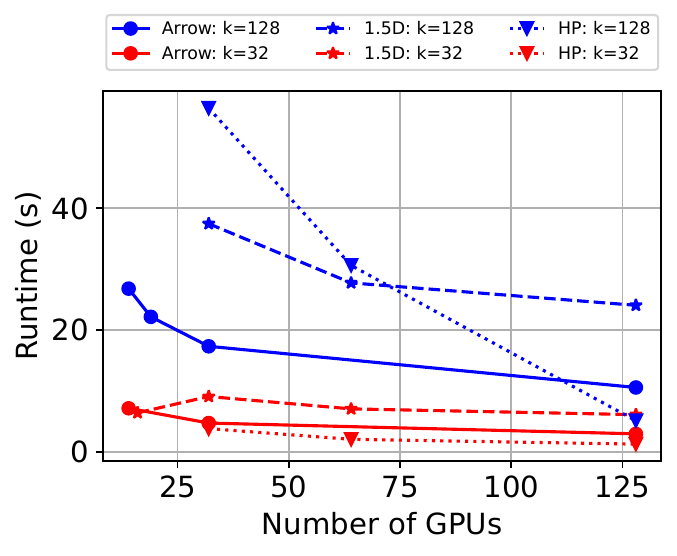}
        \caption{GAP-twitter 62M} 
    \label{fig:genbank-214M} 
    \end{subfigure}
    \vspace{-0.5em}
    \caption{Strong scaling results for varying features sizes.}
    \label{fig:strong-scaling}
    \vspace{-1em}
\end{figure}
Figure \ref{fig:strong-scaling} summarizes our strong scaling evaluation for varying GPU counts and feature matrix column sizes ($k\in {32, 128}$).
Our approach outperforms the 1.5D baseline by significant margins in all instances, except for GAP-twitter on 16 ranks and k=32. In the other cases, the speedup is between 1.7x and 14x.
For the MAWI graphs, the hypergraph partitioning baseline does not scale and is up to $58$ times slower than our approach. On sk-2005 and GAP-twitter our approach is up to 1.4 and 2 times faster, respectively. On the other graphs, our approaches has a similar runtime with HP-1D. 
Generally, the more features, the greater the runtime improvement over both baselines. The more skewed the degree distribution is, the larger our improvement over the hypergraph partitioning baseline.
%

%
%
%
We noticed significant load imbalance in GPU kernels on the MAWI graphs, reaching up to $8$x. 
This imbalance renders both baselines less effective in reducing SpMM kernel runtime with increasing ranks. Moreover, when $c^2 \leq p$, more communication rounds are needed for the 1.5D baseline. 
Hence, sometimes increasing the number of ranks increases the runtime on MAWI. 
We observe that the MAWI datasets contain very large star subgraphs, leading to a high maximum degree. Moreover, they cause load imbalance issues on the GPU, which we think responsible for the comparatively long execution times on the MAWI instances.

\minisec{Weak Scaling}
We applied our decomposition to all MAWI datasets, maintaining a consistent arrow width of $3$ million for a constant computational load per rank. As depicted in Figure \ref{fig:mawi-weak-scaling}, as the dataset size grows from 19 million to 226 million vertices, the runtime only increases marginally by $2.4-6.2\%$. In contrast, the 1.5D baseline decelerates by a factor of $3-3.18$ when going from the smallest to the largest dataset, along with poorer absolute runtimes. The Hypergraph partitioning baseline's runtime grows near-linearly with the number of rows. This is because of the inherent limitations of a 1D decomposition on matrices with highly skewed degree distributions.
\begin{figure}[tb]
    \centering
    \resizebox{\linewidth}{!}{
    \includegraphics[trim=0.5cm 1.1cm 0.5cm 0.1cm]{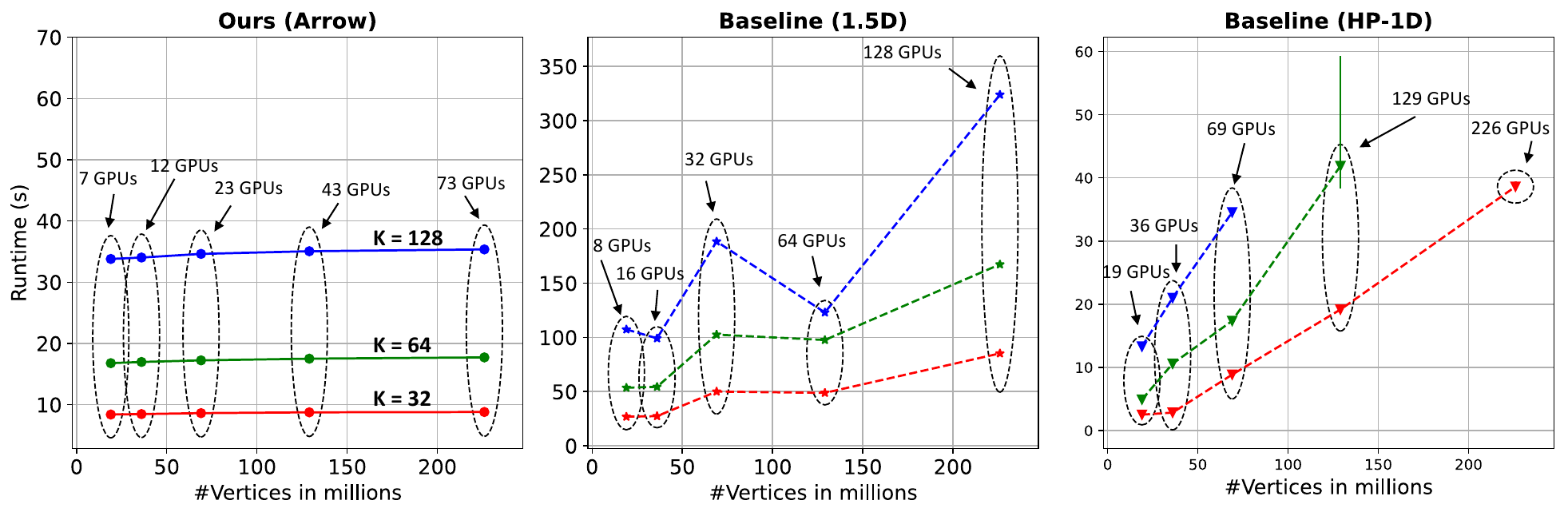}
    }
    \caption{Weak scaling on the MAWI datasets. Missing datapoints represent out-of-memory events.}
    \label{fig:mawi-weak-scaling}
    \vspace{-1.1em}
\end{figure}

\section{Conclusion}
In conclusion, we introduce a novel matrix decomposition, investigating its properties both theoretically and experimentally. The resulting matrices exhibit a distinctive 'arrow' structure that facilitates distributed computation. We demonstrate its efficacy in scenarios of extreme sparsity, where graphs possess an average of $2$ to $38$ nonzeros per row, even in cases of severely skewed degree distribution.

Building upon this matrix decomposition, we propose a distributed approach for sparse times tall-skinny-dense matrix multiplication. This method, grounded in a 1.5D decomposition of the arrow matrices, addresses the memory constraints inherent in direct 1D and 1.5D decompositions. Remarkably, it has low communication costs, all the while avoiding the need to partition the computation into numerous steps. We substantiate this claim through both theoretical analysis and empirical evaluation.

Our comparison against the 1.5D decomposition and a 1D hypergraph partitioning demonstrate our method's scalability and efficiency. Strong scaling experiments highlighted its superiority in handling larger matrices and features, while weak scaling tests showcased its stable runtime with growing datasets.



\begin{acks}
This project received funding from the European Research Council under the European Union's Horizon 2020 programme (Project MAELSTROM, No. 95513) and the Horizon Europe programme (Project GLACIATION, No. 101070141). This work was also supported by the PASC DaCeMI project, as well as UrbanTwin, a joint initiative of the Board of the Swiss Federal Institutes of Technology, in the strategic areas of Energy, Climate and Environmental Sustainability and Engagement and Dialogue with Society. A.N.Z. and T.B.N. are funded by the Swiss National Science Foundation ("QuaTrEx" Project No. 20935 and Ambizione Project No. 185778, respectively). The authors wish to acknowledge the Swiss National Supercomputing Center (CSCS) for providing computing infrastructure.
\end{acks}

\bibliographystyle{acm}

\bibliography{main.bib}

\end{document}